\documentclass[english]{article}

\makeatletter

\usepackage[T1]{fontenc}
\usepackage[latin9]{inputenc}
\usepackage{amsthm}
\usepackage{amsmath}
\usepackage{amssymb}
\usepackage{graphicx}
\usepackage{dsfont}
\usepackage{cite}
\usepackage{amsmath}
\usepackage{array}
\usepackage{multirow}
\usepackage{caption}
\usepackage{color}

\usepackage{multicol}

\theoremstyle{plain}

\theoremstyle{plain}

\theoremstyle{plain}
\newtheorem{lem}{\protect\lemmaname}
\theoremstyle{plain}
\newtheorem{thm}{\protect\theoremname}
\theoremstyle{plain}
\newtheorem{cor}{\protect\corollaryname}  
\theoremstyle{definition}
\newtheorem{defn}{\protect\definitionname}
\theoremstyle{definition}

\theoremstyle{definition}

\makeatother
  
\usepackage{babel} 

\providecommand{\claimname}{Claim}
\providecommand{\lemmaname}{Lemma}
\providecommand{\propositionname}{Proposition}
\providecommand{\theoremname}{Theorem}
\providecommand{\corollaryname}{Corollary} 
\providecommand{\definitionname}{Definition}
\providecommand{\assumptionname}{Assumption}
\providecommand{\remarkname}{Remark}

\newcommand{\overbar}[1]{\mkern 1.25mu\overline{\mkern-1.25mu#1\mkern-0.25mu}\mkern 0.25mu}





\newcommand{\Shat}{\widehat{S}}









\newcommand{\kbar}{\overbar{k}}



%
%

%
%

%
%

%
%

%
%


%
%

%
%

%
%

%
%

\newcommand{\Xv}{\mathbf{X}}

\newcommand{\PP}{\mathbb{P}}

%
%

%
%

%
%

%
%

%
%

\newcommand{\sett}[1]{\left\lbrace#1\right\rbrace}

\newcommand{\bb}[1]{\textbf{#1}}
\newcommand{\pr}{\mathbb{P}}

\usepackage{amsmath}
\usepackage{amssymb}
\usepackage{fullpage}
\usepackage{comment}
\usepackage{bbm}
\usepackage{graphicx}
\usepackage{xcolor}
\usepackage{amsthm}
\usepackage{url}
\usepackage[ruled,vlined]{algorithm2e}

\newenvironment{Figure}
{\par\medskip\noindent\minipage{\linewidth}}
{\endminipage\par\medskip}

\usepackage{color}
\usepackage{url}
\usepackage{enumitem}
\usepackage{float}
\usepackage{comment}
\usepackage{amsthm}
\usepackage{breqn}

\floatstyle{ruled}
\newfloat{algorithm}{tbp}{loa}
\providecommand{\algorithmname}{Algorithm}
\floatname{algorithm}{\protect\algorithmname}

\usepackage[top=1in, left=1in, right=1in, bottom=1in]{geometry}

\usepackage{scalerel,stackengine}

\makeatletter
\newcommand{\manuallabel}[2]{\def\@currentlabel{#2}\label{#1}}
\makeatother

\sloppy

\pagenumbering{gobble}

\numberwithin{equation}{section}
\numberwithin{thm}{section}
\numberwithin{defn}{section}
\numberwithin{lem}{section}
\numberwithin{cor}{section}

\begin{document}


\title{Optimal Non-Adaptive Probabilistic Group \\ Testing in General Sparsity Regimes}




\author{Wei Heng Bay, Eric Price, and Jonathan Scarlett}

\date{}
\maketitle

\begin{abstract}
    In this paper, we consider the problem of noiseless non-adaptive probabilistic group testing, in which the goal is high-probability recovery of the defective set.  We show that in the case of $n$ items among which $k$ are defective, the smallest possible number of tests equals $\min\{ C_{k,n} k \log n, n\}$ up to lower-order asymptotic terms, where $C_{k,n}$ is a uniformly bounded constant (varying depending on the scaling of $k$ with respect to $n$) with a simple explicit expression.  The algorithmic upper bound follows from a minor adaptation of an existing analysis of the Definite Defectives (DD) algorithm, and the algorithm-independent lower bound builds on existing works for the regimes $k \le n^{1-\Omega(1)}$ and $k = \Theta(n)$.  In sufficiently sparse regimes (including $k = o\big( \frac{n}{\log n} \big)$), our main result generalizes that of Coja-Oghlan {\em et al.} (2020) by avoiding the assumption $k \le n^{1-\Omega(1)}$, whereas in sufficiently dense regimes (including $k = \omega\big( \frac{n}{\log n} \big)$), our main result shows that individual testing is asymptotically optimal for any non-zero target success probability, thus strengthening an existing result of Aldridge (2019) in terms of both the error probability and the assumed scaling of $k$.
    
\end{abstract}

\long\def\symbolfootnote[#1]#2{\begingroup\def\thefootnote{\fnsymbol{footnote}}\footnote[#1]{#2}\endgroup}

\symbolfootnote[0]{ W.H.~Bay and J.~Scarlett are with the  Department of Computer Science and the Department of Mathematics, National University of Singapore  (e-mail: \url{bayweiheng@gmail.com, scarlett@comp.nus.edu.sg}). J.~Scarlett is also with the Institute of Data Science, National University of Singapore. E.~Price is with the Department of Computer Science, University of Texas at Austin (e-mail: \url{ecprice@cs.utexas.edu}).

E.~Price was supported in part by NSF Award CCF-1751040 (CAREER).  J.~Scarlett was supported by an NUS Early Career Research Award.}

\section{Introduction}

The group testing problem was originally studied in the context of testing blood samples for rare diseases \cite{Dor43}, with the key idea being to reduce the required number of tests via pooling.
Group testing has since found applications in communications \cite{Ant11}, information retrieval \cite{Cor05}, compressed sensing \cite{Gil08}, and most recently, COVID-19 testing \cite{Yel20}.

The problem is formally defined as follows: There are $n$ items $[n]=\sett{1,2,\ldots,n}$, a subset $S\subseteq [n]$ of which is defective, with $|S|=k$.  A number of tests are performed, each taking as input a subset of items, and returning positive if and only if the subset contains at least one defective item.  A group testing algorithm specifies the number of tests $T$, the items included in each test, and a decoder that returns an estimate $\hat{S}$ of the defective set given the test outcomes.   We are interested in the required number of tests to attain asymptotically vanishing error probability, i.e., $\lim_{n \to \infty} \PP[\hat{S}\ne S] = 0$. 


We focus on the non-adaptive setting, in which all tests must be specified prior to observing any outcomes; this is often highly desirable in applications, since it permits the tests to be implemented in parallel.  In this setting, the tests can be represented as a test matrix $\bb{X} \in \{0,1\}^{T \times n}$, where the $(i,j)$-th entry is 1 if and only if the $i$-th test contains the $j$-th item.  The test outcomes are then given by the element-wise ``OR'' of the $k$ columns corresponding to the defective items.  Mathematically, the $i$-th outcome is given by
\begin{align}
    Y^{(i)}=\bigvee_{j\in S}X_j^{(i)}, \label{eq:test_outcome_formula}
\end{align}
where $X_j^{(i)}$ is the $(i,j)$-th entry of $\bb{X}$.

We place a random model on the defective set $S$.  Throughout the majority of the paper, we assume that each item is included in $S$ (i.e., is defective) independently with some probability $p$ (possibly depending on $n$), referred to as the {\em prevalence}.  We also consider a closely-related model in which $k$ is fixed, and $S$ is a uniformly random subset of $[n]$ of cardinality $k$.  We refer to these models as the {\em i.i.d.~prior} and {\em combinatorial prior} respectively.  The two are closely related, since under the i.i.d.~model we have $k = np(1+o(1))$ with probability approaching one as long as $np = \omega(1)$.  See \cite[Sec.~1.A]{Ald19} for a more detailed summary of the connections between these models.

Throughout the paper, we make the mild assumption that $p \le \frac{1}{2}$ (i.i.d.~prior) or $k \le \frac{n}{2}$ (combinatorial prior).  Otherwise, the problem fails to be sparse, and it is already well-established that individual (one-by-one) testing is optimal even when adaptivity is allowed \cite{ungar1960,riccio2000sharper,aldridge2019rates}.  In addition, our analysis applies essentially unchanged when this factor of $\frac{1}{2}$ is replaced by any fixed constant less than one.

\section{Existing Results and Contributions} \label{sec:related}

Here we state the most relevant existing results on probabilistic non-adaptive group testing, and state our own main results in the context of these existing ones.  For consistency with the vast majority of existing works, we express the previously-known results in terms of $k$ and $n$, corresponding to the combinatorial prior.  However, the same results apply under the i.i.d.~prior when $k$ is replaced by $\kbar := np$ throughout, under the mild assumption that $\kbar \to \infty$ as $n \to \infty$.

\subsection{Main Existing and New Results}

A simple counting-based (or entropy-based) argument reveals that the number of tests for the high-probability recovery of $S$ must satisfy $T \ge (1-o(1)) \log_2 {n \choose k}$, or more simply $T = \Omega\big(k \log \frac{n}{k}\big)$ \cite{Mal78,Cha11,Ati12,Bal13}.  Moreover, this scaling is order-optimal in the following widely-considered regimes:
\begin{itemize}
    \item If $k \le n^{1-\Omega(1)}$, then we have $k \log \frac{n}{k} = \Theta(k \log n)$, and thus, the lower bound matches the ubiquitous $O(k \log n)$ upper bound obtained via random testing \cite{Fre75,Mal78,Cha11,Ald14a} or certain explicit designs \cite{Ina19}.
    \item If $k = \Theta(n)$, then we have $k \log \frac{n}{k} = \Theta(n)$, and thus, the lower bound matches the trivial $O(n)$ upper bound corresponding to testing each item individually.
\end{itemize}
While these observations cover the majority of scaling regimes, there remain ``mildly sublinear'' regimes in which the existing upper and lower bounds do not match, namely, $k = \Theta\big( \frac{n}{f(n)} \big)$ for any $f(n)$ satisfying $f(n) = \omega(1)$ and $f(n) = o(n^c)$ for all $c > 0$.  A notable example of such a regime is $k = \frac{n}{ {\rm poly}(\log n) }$.  Our first main result, stated below, closes this gap by showing that the correct scaling is always $\Theta( \min\{ k \log n, n\} )$.

Before stating the result, we introduce the following threshold:
\begin{equation}
    T^*(n,k) = \max\Big\{ k \log_2\frac{n}{k}, \frac{k \log_2 k}{\ln 2}\Big\}. \label{eq:T*}
\end{equation}
While the above discussion focuses on scaling laws, recent refined analyses \cite{Joh16,Ald18,Coj19,Coj19a} have nailed down the precise constants in the above-mentioned scaling regimes:
\begin{itemize}
    \item When $k \le n^{1-\Omega(1)}$, the optimal threshold for non-adaptive group testing is $T^*$ \cite{Coj19a}.  Specifically, there exists a strategy using $T \le (1+\epsilon)T^*$ tests that succeeds with probability approaching one and has decoding time polynomial in $n$, whereas any algorithm requires $T \ge (1-\epsilon)T^*$ to have a success probability bounded away from zero.
    \item When $k = \Theta(n)$, the optimal threshold for non-adaptive group testing is $n$ \cite{Ald18}.  Specifically, with $T = n$ one can trivially use one-by-one testing, whereas any strategy attaining success probability arbitrarily close to one must have $T \ge n-1$.\footnote{The subtraction of one is merely due to the fact that under the combinatorial prior, knowing the status of $n-1$ items also implies knowing the status of the remaining item.}
\end{itemize}
Based on these results, a reasonable guess is that the optimal threshold for group testing in general scaling regimes is $\min\{T^*(n,k),n\}$ (which scales as $\Theta( \min\{ k \log n, n\} )$, consistent with the above discussion).  Our main result, stated as follows, reveals that this is indeed the case.

\begin{thm} \label{thm:main1}
    In the non-adaptive group testing problem with $n$ items and prevalence $p$ (possibly depending on $n$) under the i.i.d.~prior with $p = \omega\big(\frac{1}{n}\big)$ and $p \le \frac{1}{2}$,\footnote{The assumption $p = \omega\big(\frac{1}{n}\big)$ ensures that $\kbar = np = \omega(1)$, and hence the number of defectives concentrates around $\kbar$.  The theorem cannot be true as stated when $p \le O\big(\frac{1}{n}\big)$; for example, even the trivial strategy of declaring every item non-defective has $\Omega(1)$ probability of succeeding, in contrast to the second part of the theorem.  In addition, our main novel contribution is handling the significantly denser regime $\kbar = n^{1-o(1)}$.  } we have the following for any $\epsilon > 0$:
    \begin{itemize}
        \item There exists a test design and polynomial-time decoding algorithm using $T \le \min\{ (1+\epsilon)T^*(n,np), n\}$ tests and having a success probability approaching one as $n \to \infty$.
        \item Any group testing strategy having success probability bounded away from zero as $n \to \infty$ must use at least $T \ge (1-\epsilon)\min\{ T^*(n,np), n \}$ tests. 
    \end{itemize}
\end{thm}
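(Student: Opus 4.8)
The plan is to treat achievability and the converse separately, and in each case to split into a ``mildly sublinear or sparser'' regime in which $\min\{T^*(n,np),n\}=T^*(n,np)$ (roughly $np=O(n/\log n)$) and a ``dense'' regime in which the minimum equals $n$, and then to transfer from the combinatorial prior to the i.i.d.\ prior. For the transfer I will use that $\bar k:=np=\omega(1)$ makes $|S|$ concentrate in an interval $[\bar k(1-\rho),\bar k(1+\rho)]$ with probability $\to 1$: any i.i.d.\ scheme, conditioned on this event, succeeds with probability bounded away from zero at \emph{some} cardinality $k_0$ in the interval, so the combinatorial lower bound at $k_0$ applies, and $\min\{T^*(n,k_0),n\}=(1+o(1))\min\{T^*(n,np),n\}$ since $T^*$ varies slowly in $k$; conversely, the achievability schemes below can be made to work simultaneously for all $k_0$ in the interval, so it suffices to argue everything under the combinatorial prior.

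\noindent\textbf{Achievability.}
In the dense regime I will just test every item individually, using $n=\min\{(1+\epsilon)T^*,n\}$ tests and recovering $S$ exactly (this is the Definite Defectives decoder applied to the identity design). In the complementary regime $(1+\epsilon)T^*<n$, the plan is to use the Definite Defectives (DD) decoder under a near-constant-column-weight random design: for $k\le n^{1-\Omega(1)}$ one can quote the existing efficient algorithm of Coja-Oghlan \emph{et al.}\ achieving $(1+\epsilon)T^*$ tests, and for the remaining mildly sublinear range $n^{1-o(1)}\le k=O(n/\log n)$ (where $T^*=(1+o(1))\frac{k\log_2 k}{\ln 2}$ and DD is rate-optimal) I will re-run the existing DD analysis and verify that it survives the extension. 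Recall that DD never produces a false positive, so the only failure event is that some defective item appears in no test whose only ``possible defective'' (item not ruled out by a negative test) is itself; the adaptation is minor because the relevant column weight is still $\omega(1)$ and one only has to recheck that the expected number of masked non-defectives and the per-defective isolation probability behave as before now that $\log(n/k)$ need not be $\Theta(\log n)$.

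\noindent\textbf{Converse.}
I will prove three lower bounds whose maximum is $(1-o(1))\min\{T^*,n\}$. First, the counting bound: a decoder takes at most $2^T$ values, so success probability at least $\delta$ forces $2^T\ge\delta\binom nk$ and hence $T\ge(1-o(1))\log_2\binom nk\ge(1-o(1))k\log_2\frac nk$, for every $k$. Second, a confusability bound giving $T\ge(1-o(1))\frac{k\log_2 k}{\ln 2}$ whenever this quantity is at most $n$: if some defective $i$ lies in no test that is free of other defectives and some non-defective $j$ lies in no negative test, then interchanging $i$ and $j$ leaves every test outcome unchanged, so those two equally likely defective sets are indistinguishable and any decoder errs on one of them; I will show that for \emph{any} design with $T\le(1-\epsilon)\frac{k\log_2 k}{\ln 2}$ such a pair exists with probability bounded away from zero, by lower-bounding the expected numbers of non-isolated defectives and of masked non-defectives uniformly over designs and controlling their variances. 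Third, $T\ge(1-o(1))n$ in the dense regime $T^*>n$, obtained by extending Aldridge's linear-regime argument: when $k=\Omega(n/\log n)$ only small tests can ever isolate a defective, an item missed by all such tests cannot be told apart from the defective case, and a dedicated counting argument turns this into the sharp conclusion that all but $o(n)$ items must lie in essentially singleton tests. Combining the three: when $T^*\le n$ the first two give $\max\{k\log_2\frac nk,\frac{k\log_2 k}{\ln 2}\}=T^*$, and when $T^*>n$ the third gives $n$.

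\noindent\textbf{Main obstacle.}
The counting bound and the individual-testing achievability are routine. The hard part will be the second and third converse bounds: the existing proofs of the $\frac{k\log_2 k}{\ln 2}$ bound (Coja-Oghlan \emph{et al.}) and of the linear-regime bound (Aldridge) each exploit slack specific to $k\le n^{1-\Omega(1)}$ or to $k=\Theta(n)$ (for instance $\log(n/k)=\Theta(\log n)$, the approximation $\binom nk=(n/k)^{k(1+o(1))}$, and concentration inequalities with constant-factor room), so each estimate must be re-derived with $1+o(1)$ accuracy; the design-uniformity of the first- and second-moment bounds is the genuinely delicate ingredient, and at the same time the hypothesis has to be weakened from vanishing error probability to error probability bounded away from one. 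The parallel re-examination of the DD achievability analysis for $k=n^{1-o(1)}$ and $k=n/\mathrm{poly}(\log n)$ is a secondary, largely mechanical obstacle.
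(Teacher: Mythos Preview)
Your achievability plan is essentially the paper's: individual testing when $\min\{(1+\epsilon)T^*,n\}=n$, and otherwise the DD analysis of Johnson--Aldridge--Scarlett re-run with $m=1$ in place of $\max\{\theta,1-\theta\}$, checking that the three error terms still vanish when $k=n^{1-o(1)}$. That part is fine.

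The converse plan has the right target (the three bounds whose maximum is $\min\{T^*,n\}$), but the proposed mechanism for the second and third bounds has a real gap. You say you will ``lower-bound the expected numbers of non-isolated defectives and of masked non-defectives uniformly over designs and control their variances.'' The difficulty is that, for an arbitrary design, the events ``item $i$ is totally disguised'' are \emph{not} weakly correlated: items sharing tests have heavily dependent disguise events, and there is no design-uniform second-moment bound that forces concentration. A bare first/second-moment argument will not deliver the strong converse (error probability $\to 1$) uniformly over designs, which is what the theorem demands.

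The paper sidesteps this by constructing, via an explicit iterative procedure on the bipartite item--test graph, a large set $W$ of items whose $2$-neighbourhoods are pairwise disjoint, so that their disguise events are \emph{genuinely independent} and a Chernoff bound applies. Each iteration (i) removes tests with $\le 1$ item and the items they contain, (ii) picks the item with the highest disguise probability (controlled by Aldridge's inequality $\frac{1}{n}\sum_i \ln\PP[D_i]\ge \frac{T}{n}\mathcal{L}(p)$), and (iii) deletes its radius-$4$ neighbourhood. The new ingredient---the one the paper flags as the refinement needed beyond Coja-Oghlan \emph{et al.}\ to reach $k=n^{1-o(1)}$---is the stopping rule: terminate as soon as the running ratio $T_i/n_i$ exceeds $(1+\xi)T/n$. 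A short balancing argument (step (i) decreases $T_i$ and $n_i$ by the same amount, which only lowers the ratio when $T/n<1$; the other deletions remove only $o(n)$ items total) then shows the procedure runs for at least $n^{1-3\xi}$ iterations before stopping. This single construction, combined with the asymptotics $-\mathcal{L}(p)=\frac{(\ln 2)^2}{p}(1+o(1))$ for $p=o(1)$, yields \emph{both} the $\frac{k\log_2 k}{\ln 2}$ threshold and the $n$ threshold at once; there is no separate ``dense regime'' argument of the singleton-test-counting type you sketch. Your plan is missing this independence-by-construction idea, and without it the variance control you propose will not go through uniformly in the design.
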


While we find it most convenient to establish this result for the i.i.d.~prior, we can use known connections between the two priors to establish the following analog for the combinatorial prior.

\begin{cor} \label{cor:main1}
    In the non-adaptive group testing problem with $n$ items and $k \le \frac{n}{2}$ defectives under the combinatorial prior, we have the following for any $\epsilon > 0$:
    \begin{itemize}
        \item There exists a test design and polynomial-time decoding algorithm using $T \le \min\{ (1+\epsilon)T^*(n,k), n\}$ tests and having a success probability approaching one as $n \to \infty$.
        \item Any group testing strategy having success probability bounded away from zero as $n \to \infty$ must use at least $T \ge (1-\epsilon)\min\{ T^*(n,k), n \}$ tests. 
    \end{itemize}
\end{cor}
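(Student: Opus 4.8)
The plan is to deduce Corollary~\ref{cor:main1} from Theorem~\ref{thm:main1} by taking prevalence $p=k/n$, so that $\kbar=np=k$, and invoking the standard correspondence between the i.i.d.\ and combinatorial priors, valid once $\kbar\to\infty$ (cf.\ \cite[Sec.~1.A]{Ald19}). It suffices to argue along an arbitrary subsequence of $n$; along any subsequence either $k$ stays bounded, in which case $k\le n^{1-\Omega(1)}$, $T^*(n,k)=\Theta(\log n)$, and both parts follow from \cite{Coj19a}, or $k\to\infty$ along a further subsequence, which we treat below. Two elementary facts are used. \emph{Concentration}: $|S|\sim\mathrm{Binomial}(n,k/n)$ obeys $\bigl||S|-k\bigr|=O\bigl(\sqrt{k\log k}\bigr)=o(k)$ with probability $1-o(1)$, and $k$ is the median of this law, so $\PP[|S|\ge k]\ge\tfrac12$ and $\PP[|S|\le k]\ge\tfrac12$. \emph{Stability}: for $k'=k(1+o(1))$ with $k\to\infty$, $T^*(n,k')=(1+o(1))T^*(n,k)$ and $\min\{T^*(n,k'),n\}=(1+o(1))\min\{T^*(n,k),n\}$, by perturbing the two terms of the maximum in \eqref{eq:T*}.

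\emph{Achievability.} Instantiate the design and decoder of Theorem~\ref{thm:main1} at prevalence $k/n$; this uses exactly $\min\{(1+\epsilon)T^*(n,k),n\}$ tests, runs in polynomial time, and has $\PP_{\mathrm{iid},\,k/n}[\Shat\neq S]\to0$. Since the decoder is of Definite Defectives type --- its rule ignores the exact defective count --- and the DD error probability is non-decreasing in the number of defectives (via the coupling $S_j\subset S_{j+1}$: adding a defective item can only shrink the set of items eliminated by negative tests), we get
\[
\PP_{\mathrm{iid},\,k/n}\bigl[\Shat\neq S\bigr]=\sum_{j\ge0}\PP[|S|=j]\,\PP_{\mathrm{comb},\,j}\bigl[\Shat\neq S\bigr]\ge\PP[|S|\ge k]\,\PP_{\mathrm{comb},\,k}\bigl[\Shat\neq S\bigr]\ge\tfrac12\,\PP_{\mathrm{comb},\,k}\bigl[\Shat\neq S\bigr],
\]
so $\PP_{\mathrm{comb},\,k}[\Shat\neq S]\le2\,\PP_{\mathrm{iid},\,k/n}[\Shat\neq S]\to0$, as needed. (Equivalently, one can observe that the DD analysis behind Theorem~\ref{thm:main1} controls the error conditionally on $S$ uniformly over $|S|$ in a window around $\kbar$, which includes $|S|=k$.)

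\emph{Converse, and the main obstacle.} Here one cannot simply condition the i.i.d.\ converse on $\{|S|=k\}$: that event has probability only $\Theta(1/\sqrt k)\to0$, and conditioning on it would inflate any error bound by a factor $\Theta(\sqrt k)$. Instead the correspondence must be used inside the converse argument: the lower-bound proof underlying Theorem~\ref{thm:main1} depends on the prevalence only through $\kbar=np$ up to lower-order terms, so by the concentration fact ``$|S|\sim\mathrm{Binomial}(n,k/n)$'' may be replaced by ``$|S|=k$'' throughout without changing its conclusion beyond a $(1+o(1))$ factor, and the stability fact then yields the bound $(1-\epsilon')\min\{T^*(n,k),n\}$. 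For $k\le n^{1-\Omega(1)}$ and $k=\Theta(n)$ this reduces to the combinatorial-prior statements of \cite{Coj19a} and \cite{Ald18} respectively, so the only genuinely new work is re-running the converse of Theorem~\ref{thm:main1} in the combinatorial setting for the mildly sublinear regimes --- making this last transfer fully rigorous, and in particular verifying that the prevalence-to-$\kbar$ reduction inside the converse loses only a $(1+o(1))$ factor, is the step I expect to be most delicate.
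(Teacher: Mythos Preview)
Your transfer goes in the wrong direction. The DD analysis in Appendix~\ref{app:dd} (following \cite{Joh16}) is carried out \emph{directly for the combinatorial prior}; the i.i.d.\ case is then deduced from it, not the other way around. So no transfer is needed for the first part of the corollary. Your monotonicity argument (error under $S_j$ implies error under $S_{j+1}$ via the coupling $S_j\subset S_{j+1}$, since DD has no false positives and the PD set can only grow) is correct and pleasant, but superfluous here.

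\textbf{Converse.} You correctly identify the obstacle---conditioning on $\{|S|=k\}$ costs a $\Theta(\sqrt{k})$ factor---but you do not resolve it; the final paragraph is a hope rather than an argument. The paper's fix is concrete and different from ``re-running the i.i.d.\ proof with $|S|=k$'': one \emph{couples} the two priors by first drawing $S_0$ i.i.d.\ with the slightly reduced prevalence $p_0=(k-\sqrt{k}\ln n)/n$, then topping up with uniformly random items to reach exactly $k$. Conditioned on $|S_0|\le k$ (which holds w.h.p.), the resulting $S$ has the combinatorial prior. The i.i.d.\ analysis at prevalence $p_0$ yields, with high probability, $\omega(1)$ totally disguised defectives and $\omega(1)$ totally disguised non-defectives in $S_0$. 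Since $S_0\subseteq S$, any item totally disguised by $S_0$ remains totally disguised by $S$; and since the top-up step reclassifies only $O(\sqrt{k}\ln n)$ non-defectives, it destroys at most that many totally disguised non-defectives, leaving $\omega(1)$ of each type. One then invokes the combinatorial-prior fact (Lemma~\ref{lem:cond_error}) that $\omega(1)$ totally disguised items of each type force error probability $1-o(1)$. This two-step sampling, not a black-box reduction, is the missing ingredient in your proposal.
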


These results not only show that optimal non-adaptive group testing requires $T = \Theta(\min\{k \log n,n\})$ tests in general sparsity regimes, but also provide the precise underlying constants.

The algorithmic upper bounds in Theorem \ref{thm:main1} and Corollary \ref{cor:main1} are already known in the regime $k \le n^{1-\Omega(1)}$ \cite{Joh16,Coj19a}, and follow trivially from one-by-one testing when $\min\{ (1+\epsilon)T^*, n\} = n$.  Hence, it suffices to establish success using $(1+\epsilon)T^*$ tests in the regime $k = n^{1-o(1)}$.  Fortunately, although the analysis of the Definite Defectives (DD) algorithm in \cite{Joh16} was formally only stated for $k \le n^{1-\Omega(1)}$, the analysis can be adapted to the regime $k = n^{1-o(1)}$ with only minor modifications.  We detail the required changes in Appendix \ref{app:dd}.

As for the algorithm-independent lower bounds, with the regime $k \le n^{1-\Omega(1)}$ having been solved in \cite{Coj19a}, we can again focus on the regime $k = n^{1-o(1)}$ (including $k = \Theta(n)$).  In this case, we were unable to directly infer the desired result from \cite{Coj19a}, and we thus provide a detailed proof in Section \ref{sec:proofs}, though we still naturally re-use the main tools and ideas proposed in \cite{Coj19a}.


{\bf Specialization to dense regimes.} As hinted above, when the $\min\{T^*,n\}$ term is attained by $n$, our results indicate that one-by-one testing is asymptotically optimal.  This is consistent with the above-mentioned result of \cite{Ald18}, but also strengthens it in two ways:
\begin{itemize}
    \item Individual testing is not only asymptotically optimal when the goal is to succeed with probability approaching one, but also when the goal is attaining {\em any} strictly positive target success probability.
    \item Individual testing is not only asymptotically optimal when $k = \Theta(n)$, but also when $k = \omega\big( \frac{n}{\log n} \big)$, or even more generally, when $k > \frac{n \ln 2}{\log_2 n}$.
\end{itemize}
On the other hand, it is worth noting that our result only indicates failure when $T < (1-\epsilon)n$, whereas that of \cite{Ald18} handles the more general scenario $T < n-1$.  This distinction is necessary when establishing high-probability failure and/or handling the regime $k = o(n)$, since otherwise one could consider a strategy that (e.g.) tests the first $n-2$ items one-by-one and then guesses the remaining two to be non-defective.

{\bf Note on partially concurrent work.} In the initial version of our work, we focused only on the lower bound, and provided a weaker result with an unspecified coefficient to the $k \log n$ term in the $\min\{ k \log n, n \}$ scaling.  After releasing the initial version, the important case of $k = \Theta\big( \frac{n}{\log n}\big)$ was studied in more detail in \cite{Flo21}, giving upper and lower bounds with explicit constants.  The updated version of our work was developed in parallel with \cite{Flo21}, and establishes the precise constants.  Our upper bound in fact matches that of \cite{Flo21} (and is proved similarly), whereas a refinement of the main proof technique is needed to obtain our tight lower bound (see the stopping condition of Step 4(a), Procedure \ref{alg:fullalg}).

\subsection{Further Existing Results}

Before proceeding, we provide a brief summary of some further existing works.  Since these are less directly related to our work, we omit the details, and refer the reader to \cite{Du93,Ald19} for more detailed surveys.


For certain variants of group testing, the optimal number of tests is $\Theta\big( k \log \frac{n}{k}\big)$, as opposed to $\Theta(\min\{k \log n,n\})$ under the setup we consider.  Specifically, two notable cases with scaling $\Theta\big( k \log \frac{n}{k}\big)$ are (i) the adaptive setting, in which each test can be designed based on previous outcomes \cite{Hwa72,aldridge2019rates,aldridge2020conservative}, and (ii) the approximate recovery criterion, in which $\Theta(k)$ false positives and $\Theta(k)$ false negatives are allowed in the reconstruction \cite{Sca15b,Sca17}.  

In contrast to the noiseless setting that we consider in this paper, in the {\em noisy} setting, the number of tests is at least $\Omega(k \log n)$ even if $k \log n \gg n$, and even if adaptivity is allowed \cite{Sca18}.

Finally, while the focus of our work is on high-probability recovery, extensive results have been established for the stronger guarantee of {\em uniform recovery}, i.e., a single test matrix that uniquely recovers any defective set of cardinality at most $k$, without allowing any error probability (e.g., see \cite{Kau64,Dya82,Du93,Che13a} and the references therein).  This stronger guarantee comes at the price of requiring significantly more tests, with a quadratic dependence on $k$ instead of a linear dependence.  In addition, the associated proof techniques are very different.

\section{Proofs of Algorithm-Independent Lower Bounds} \label{sec:proofs}

We first consider Theorem \ref{thm:main1} regarding the i.i.d.~prior, and then turn to Corollary \ref{cor:main1} regarding the combinatorial prior.  

\subsection{Proof of the Lower Bound for Theorem \ref{thm:main1}} \label{sec:pf_thm}

Our analysis builds on the ideas of \cite{Ald18,Coj19a}, both of which identify {\em totally disguised} items (see Definition \ref{def:disguised} below) whose defectivity status can be flipped without changing the test outcomes.  In \cite{Ald18}, one such item suffices for attaining the weak converse (i.e., $\PP[\hat{S} \ne S] \not\to 0$) in the linear regime.  To obtain a stronger statement of the form $\PP[\hat{S} \ne S] \to 1$ and also handle sublinear sparsity regimes, we follow the idea from \cite{Coj19a} of identifying {\em many} such items.

Specifically, we follow the high-level steps of \cite{Coj19a} and utilize certain auxiliary results therein, but modify the details in order to handle the regime $k = n^{1-o(1)}$ instead of $k \le n^{1-\Omega(1)}$.  The key idea is to identify many items that are disguised {\em independently of one another}.  We then apply an auxiliary result of \cite{Ald18} (see Lemma \ref{lem:Aldridge} below) along with some ``clean-up'' steps to ensure that its assumptions remain valid each time it is invoked.

In the following, we let $q=1-p$ for convenience.  The following useful definition was introduced in \cite{Ald18}.

\begin{defn} \label{def:disguised}
    \cite{Ald18} 
    We say that an item $i$ is \textit{disguised} in test $t$ if at least one of the other items in the test is defective. We say that an item is \textit{totally disguised} if it is disguised in every test it is included in. Let $D_i$ denote the event that item $i$ is totally disguised.
\end{defn}

It is noted in \cite{Ald18} that if an item is totally disguised, then it remains totally disguised even if it is changed from defective to non-defective or vice versa.  Thus, under the i.i.d.~prior, the tests do not reveal any information about that item's defectivity status, and we have the following.

\begin{lem} \label{lem:cond_disguised}
    {\em (Implicit in \cite{Ald18} and \cite[Sec.~3]{Coj19a})}
    For any given test matrix $\Xv$, and a defective set $S$ generated according to the i.i.d.~prior, we have the following: Conditioned on a given item $i$ being totally disguised, that item is defective with conditional probability $p$ (i.e., the same as the prior defectivity probability).
\end{lem}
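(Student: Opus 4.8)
The plan is to establish this by a direct independence argument, the crux being that the event $D_i$ depends only on the defectivity statuses of the items \emph{other than} $i$. First I would fix the (deterministic) test matrix $\Xv$ and let $\Tc_i = \{ t : X_i^{(t)} = 1\}$ denote the set of tests containing item $i$; this set is pinned down by $\Xv$ alone and does not involve the randomness in $S$. By Definition \ref{def:disguised}, item $i$ is disguised in a test $t \in \Tc_i$ precisely when some \emph{other} item of that test is defective, i.e.\ when there exists $j \in [n] \setminus \{i\}$ with $X_j^{(t)} = 1$ and $j \in S$; crucially, this condition makes no reference to whether $i \in S$. Hence
\begin{equation}
    D_i = \bigcap_{t \in \Tc_i} \Big\{ \exists\, j \in [n] \setminus \{i\} \,:\, X_j^{(t)} = 1 \ \text{and}\ j \in S \Big\},
\end{equation}
which is a measurable function of $\big( \openone\{ j \in S \} \big)_{j \in [n] \setminus \{i\}}$ alone.

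Next I would invoke the structure of the i.i.d.~prior: the indicators $\big( \openone\{ j \in S\}\big)_{j \in [n]}$ are mutually independent, so $\openone\{ i \in S \}$ is independent of the collection $\big( \openone\{ j \in S\}\big)_{j \ne i}$, and therefore (being a function of that collection) independent of the event $D_i$. Assuming $\PP[D_i] > 0$, this immediately gives $\PP[ i \in S \mid D_i] = \PP[ i \in S] = p$, which is the claimed conclusion. In the degenerate case $\PP[D_i] = 0$ — e.g.\ if item $i$ lies in a singleton test and so can never be disguised — the conditional probability is undefined and the statement is vacuous; this case does not arise in the regimes relevant to the subsequent analysis.

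I do not expect a genuine obstacle here: the lemma is essentially a bookkeeping observation rather than a computation. The one point that warrants care is verifying that $D_i$ truly does not encode any information about item $i$'s own defectivity, which follows from two facts — the notion of ``disguised in test $t$'' quantifies only over the \emph{other} items appearing in test $t$, and the list of tests in which $i$ participates is fixed by the nonrandom design $\Xv$ — and not from any property of the prior beyond independence across items. This is also precisely why the statement is specific to the i.i.d.~prior and would need an extra argument (via the known coupling between the two priors) to transfer to the combinatorial prior.
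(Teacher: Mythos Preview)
Your proposal is correct and matches the paper's approach: the paper does not give a formal proof of this lemma, instead noting just before the statement that ``if an item is totally disguised, then it remains totally disguised even if it is changed from defective to non-defective or vice versa,'' which is precisely your observation that $D_i$ depends only on $(\openone\{j \in S\})_{j \ne i}$. Your write-up simply makes this independence argument explicit, and there is nothing to add.
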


It follows that for any totally disguised item, the best the algorithm can do is choose the more likely outcome, and succeed with probability $\max\sett{p,1-p} = 1-p$ (recalling that we focus on the case that $p \le \frac{1}{2}$).

The following result from \cite{Ald18} is crucial for characterizing the probability of items being totally disguised.

\begin{lem} \label{lem:Aldridge}
    {\em \cite[Eq.~(1)]{Ald18}}
    Define $\mathcal{L}(p)=\min_{x=2,3,\ldots,n}x\ln(1-q^{x-1})$, where $q = 1-p$. If the test design $\bb{X}$ has no tests with 0 or 1 items, then 
    \begin{align}
        \frac{1}{n}\sum_{i=1}^n\ln \pr[D_i]\geq\frac{T}{n}\cdot \mathcal{L}(p) \label{eq:AldBound}
    \end{align}
    Hence, there exists an item $i$ with $\ln \pr[D_i]\geq\frac{T}{n}\cdot \mathcal{L}(p)$.
\end{lem}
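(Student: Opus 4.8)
The plan is to lower bound $\PP[D_i]$ for each item $i$ by a product over the tests containing $i$, sum this over all items, and then reorganize the double sum as a sum over tests so that each test contributes a term controlled by $\mathcal{L}(p)$.

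First I would handle a single test: fix an item $i$ and a test $t$ containing it, and let $|t|$ be the number of items in $t$. Item $i$ is disguised in $t$ precisely when at least one of the other $|t|-1$ items of $t$ is defective, and under the i.i.d.\ prior those statuses are independent $\Bernoulli(p)$ (and independent of $i$'s own status), so $\PP[i\text{ disguised in }t] = 1 - q^{|t|-1}$. Next, to pass from one test to all tests containing $i$, I would use a correlation inequality: for fixed $i$, each event ``$i$ disguised in $t$'' is increasing in the vector of defectivity statuses of the items $j\ne i$, which has a product distribution under the i.i.d.\ prior, so the Harris/FKG inequality gives $\PP[D_i] = \PP\big[\bigcap_{t\ni i}\{i\text{ disguised in }t\}\big] \ge \prod_{t\ni i}(1-q^{|t|-1})$, i.e.\ $\ln\PP[D_i] \ge \sum_{t\ni i}\ln(1-q^{|t|-1})$.

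Then I would sum over $i$ and swap the order of summation: $\sum_{i=1}^n\sum_{t\ni i}\ln(1-q^{|t|-1}) = \sum_{t=1}^T |t|\,\ln(1-q^{|t|-1})$, since test $t$ contributes once per item it contains. Since the design has no test with $0$ or $1$ items, every $|t|$ lies in $\{2,\dots,n\}$, and as $\ln(1-q^{x-1})<0$ for $x\ge2$ and $p\in(0,1)$ we get $|t|\ln(1-q^{|t|-1}) \ge \min_{x=2,\dots,n} x\ln(1-q^{x-1}) = \mathcal{L}(p)$ for each test. Combining, $\sum_{i=1}^n\ln\PP[D_i] \ge T\,\mathcal{L}(p)$, which is \eqref{eq:AldBound} after dividing by $n$; and the existence of an item with $\ln\PP[D_i] \ge \frac{T}{n}\,\mathcal{L}(p)$ follows because a maximum is at least an average.

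The main obstacle is the passage from a single test to all tests: the events ``$i$ disguised in $t$'' for different $t$ are genuinely dependent, since one defective item can disguise $i$ in many tests at once, so one cannot just multiply marginals. The point is that these events are \emph{positively} associated---they are simultaneously increasing functions of the same independent defectivity indicators---so the Harris/FKG inequality applies and bounds the intersection probability from \emph{below} by the product of marginals, which is exactly the direction needed here. The remaining steps are a direct computation and a change in the order of summation.
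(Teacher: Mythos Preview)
Your proposal is correct and matches the approach the paper attributes to \cite{Ald18}: compute the single-test disguising probability $1-q^{|t|-1}$, use positive correlation (Harris/FKG) of the increasing events across tests to lower-bound $\PP[D_i]$ by the product, take logs, swap the sum to run over tests, and bound each test's contribution by $\mathcal{L}(p)$. The paper itself only sketches this at a high level (``directly calculating $\PP[D_i]$ in terms of the test size, then averaging over the resulting log-values and applying some simple lower bounding techniques''), and your write-up fills in precisely those steps.
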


At a high level, this lemma is proved by directly calculating $\pr[D_i]$ in terms of the $i$-th test size (which $x$ plays the role of in the definition of $\mathcal{L}(p)$), then averaging over the resulting log-values and applying some simple lower bounding techniques.

In the linear regime (i.e., $k = \Theta(n)$), one has $\mathcal{L}(p) = \Theta(1)$, and consequently, Lemma \ref{lem:Aldridge} directly implies that the success probability is bounded away from one, after removing all tests with 0 or 1 items \cite{Ald18}.  
More generally, it is natural to ask whether there are, in fact, {\em many} items $i$ with $\ln \pr[D_i]$ close to the the right-hand side of \eqref{eq:AldBound} \cite{Coj19a}. If we can find a ``large'' set $W$ of such items such that these items are totally disguised independently from each other, then we may apply standard binomial distribution concentration bounds to conclude that many totally disguised items exist, with high probability.


Following \cite{Coj19a}, we interpret the testing strategy as a bipartite graph $G_{\bb{X}}$ in which there is a vertex $v_i$ for each item $i$ and a vertex $v_t$ for each test $t$, with an edge between $v_i$ and $v_t$ if item $i$ is placed in test $t$. 
Before constructing the desired set (denoted by $W$), we present two simple lemmas (which are analogous to \cite[Lemmas 3.7 and 3.8]{Coj19a}) and two subroutines that will be useful.  

\begin{lem}
    \label{nobigtests}
    Let $z=\frac{2}{\ln\frac{1}{q}}$, and suppose that $T \le n$. Then, the probability that there exists a negative test containing more than $z\ln n$ items is at most $\frac{1}{n}$.
\end{lem}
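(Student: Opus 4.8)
The plan is a one-line first-moment (union bound) argument that exploits the fact that a test is negative precisely when none of its items is defective. First I would note that, once the test design $\bb{X}$ is fixed, whether a particular test $t$ contains more than $z\ln n$ items is a deterministic property; the only randomness is in the defective set $S$. Let $\mathcal{T}_{\mathrm{big}}$ denote the (deterministic) collection of tests containing more than $z\ln n$ items. For a fixed $t \in \mathcal{T}_{\mathrm{big}}$ with $m = |t| > z\ln n$, test $t$ is negative if and only if all $m$ of its items are non-defective, which under the i.i.d.~prior occurs with probability exactly $q^m$.

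Next I would substitute the choice $z = \frac{2}{\ln(1/q)}$. Since $q^m = \exp\big(-m\ln\tfrac{1}{q}\big) < \exp\big(-z\ln n \cdot \ln\tfrac1q\big) = \exp(-2\ln n) = n^{-2}$, every test in $\mathcal{T}_{\mathrm{big}}$ is negative with probability at most $n^{-2}$. Finally, applying the union bound over $\mathcal{T}_{\mathrm{big}}$ and using $|\mathcal{T}_{\mathrm{big}}| \le T \le n$, the probability that some negative test contains more than $z\ln n$ items is at most $T \cdot n^{-2} \le n^{-1}$, as claimed.

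There is essentially no obstacle here: the only point worth a remark is the harmless edge case in which $\ln(1/q)$ is small (i.e., $p$ close to $0$), so that $z\ln n$ may exceed every test size; then $\mathcal{T}_{\mathrm{big}} = \emptyset$, the event is empty, and the bound holds vacuously. This lemma is a warm-up estimate whose role is to control test sizes in the later construction of the set $W$ of independently-disguised items.
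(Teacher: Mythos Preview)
Your proposal is correct and matches the paper's own proof essentially line for line: bound the probability that a single large test is negative by $q^{z\ln n}=n^{-2}$ using the definition of $z$, then take a union bound over the at most $T\le n$ tests.
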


\begin{proof} 
    Recalling that $q = 1-p$, a given test containing at least $z\ln n$ items is negative with probability at most 
    \begin{align}
        q^{z\ln n}=e^{z\ln q\ln n}=\frac{1}{n^2}
    \end{align}
    by the definition of $z$.  Since $T \le n$, a union bound yields the desired result.
\end{proof}

We henceforth assume that no test contains more than $z\ln n$ items, since Lemma \ref{nobigtests} implies that the decoder may declare all such tests to be positive without increasing the error probability by more than $\frac{1}{n}\to0$.  

\begin{lem}
    \label{fewbigitems}
    Fix $\xi > 0$, and define an item to be {\em very-present} if it appears in more than $n^{\xi}$ tests. If $T \le n$ and no test contains more than $z\ln n$ items, then there are no more than $zn^{1-\xi}\ln n$ very-present items.
\end{lem}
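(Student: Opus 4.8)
The plan is a short double-counting argument on the bipartite incidence graph $G_{\bb{X}}$. I would count the number of edges of $G_{\bb{X}}$ — equivalently, the number of item-in-test incidences, i.e., the number of ones in $\bb{X}$ — in two different ways. Summing over tests: by the standing assumption that no test contains more than $z\ln n$ items, each of the $T$ test-vertices has degree at most $z\ln n$, so the total number of edges is at most $T \cdot z\ln n$, and since $T \le n$ this is at most $n z\ln n$.

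Next I would lower-bound the same edge count by summing over item-vertices. Let $M$ be the number of very-present items. By definition each very-present item appears in more than $n^{\xi}$ tests, so the edges incident to the very-present item-vertices alone number strictly more than $M\cdot n^{\xi}$ (all other items only add more edges, which we can discard). Combining the two bounds gives $M\cdot n^{\xi} < n z\ln n$, hence $M < z\, n^{1-\xi}\ln n$, which yields the claimed conclusion that there are no more than $z\, n^{1-\xi}\ln n$ very-present items.

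There is no real obstacle here: the argument is entirely elementary, and the only thing to be careful about is that both hypotheses ($T\le n$ and the per-test size bound from Lemma \ref{nobigtests}) get used, and that the strict versus non-strict inequalities are tracked consistently (the strict inequality in the definition of very-present propagates through, giving the non-strict bound in the statement). Equivalently, one may phrase the whole thing via the identity $\sum_{i} \deg(v_i) = \sum_{t} \deg(v_t)$ for the bipartite graph $G_{\bb{X}}$.
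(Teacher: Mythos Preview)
Your proposal is correct and is essentially identical to the paper's proof: the paper also counts the incidence pairs $P$, bounds $P \le T z\ln n \le n z\ln n$ from the test side and $n_{\mathrm{vp}} n^{\xi} \le P$ from the item side, and rearranges.
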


\begin{proof} 
    We count the number $P$ of pairs $(i,t)$ such that item $i$ is in test $t$. By assumption, $P \leq Tz\ln n \le nz\ln n$. Letting $n_{\mathrm{vp}}$ be the number of very-present items, it follows that $n_{\mathrm{vp}}n^{\xi}\leq P< nz\ln n$, and rearranging yields the desired result.
\end{proof}

\begin{algorithm}[t]
    \caption*{ \manuallabel{sub}{3.1} \textbf{Subroutine 3.1:} $\texttt{Clean}(\bb{X})$. \label{alg:clean} }
    
    
    \begin{enumerate}
        \item[1.] Identify the set of tests $T_{\leq1}$ containing 0 or 1 items, and the set of items $I$ contained in at least one test in $T_{\leq1}$.
        \item[2.] Return $\bb{X}_{\geq2}$, defined to be $\bb{X}$ with the rows and columns indexed by $T_{\leq1}$ and $I$ removed.
    \end{enumerate}
\end{algorithm}

\begin{algorithm}[t]
    \caption*{ \manuallabel{subr}{3.2} \textbf{Subroutine 3.2:} $\texttt{Extract}(\bb{X},W)$. \label{alg:extract} }
    
    
    
    \begin{enumerate}
        \item[1.] Let $\widetilde{D}_i$ be the event that $i$ is totally disguised with respect to $\bb{X}$. Let the item with the highest $\pr[\widetilde{D}_i]$ be denoted by $i_0$, and set $W_{\rm next} = W \cup \{i_0\}$.
        \item[2.] Let $T_{\textrm{close}}$ and $I_{\textrm{close}}$ denote the sets of tests and items within distance at most 4 from $i_0$ in $G_{\bb{X}}$.
        \item[3.] Set $\bb{X}_{\rm pruned}$ to be $\bb{X}$ with the rows and columns indexed by $T_{\textrm{close}}$ and $I_{\rm close}$ removed.
        \item[4.] Return $(\bb{X}_{\textrm{pruned}},W_{\rm next})$
    \end{enumerate}
\end{algorithm}

We are now in a position to describe the construction of the desired set $W$, namely, a set of items that are disguised independently of one another.  To establish a hardness result, we would like to ensure that the size of $W$ and the probability of each $i \in W$ being disguised are both large enough, so that the resulting error probability is high.

Towards achieving this goal, we introduce Subroutines \ref{alg:clean} and \ref{alg:extract}.  \texttt{Clean} removes all tests with 0 or 1 items, allowing us to apply Lemma \ref{lem:Aldridge}, and \texttt{Extract} adds an item to $W$.  Both will be called multiple times in the construction of $W$, and their calls will reduce the effective $T$ and/or $n$.


\begin{algorithm}[t]
    \caption*{ \manuallabel{algo}{3.1} \textbf{Procedure 3.1:} $\texttt{ConstructSet}(\bb{X})$. \label{alg:fullalg} }
    
    
    
    \begin{enumerate}
        \item[1.] Let $G_0=G_{\bb{X}}$, and let $(n,T)$ be the number of items and tests in $\bb{X}$. Remove all very-present items from $G_0$ to obtain $G_1$. Let $G=G_1$.
        \item[2.] Initialize $W_0=\emptyset$, $i=1$.
        \item[3.] Set $\bb{X}_i\leftarrow$ test design represented by $G_i$. Set $\bb{X}_{\mathrm{tmp},i}\leftarrow\texttt{Clean}(\bb{X}_i)$, and let $(n_i,T_i)$ be the corresponding number of items and tests in $\bb{X}_{\mathrm{tmp},i}$.
        \item[4.] Perform the following:
        \begin{itemize}[leftmargin=3ex]
            \item[(a)] If $n_i > 0$ and $\frac{T_i}{n_i} \le (1+\xi) \frac{T}{n}$, then set $(\bb{X}_{i+1},W_{i}) \leftarrow\texttt{Extract}(\bb{X}_{\mathrm{tmp},i},W_{i-1})$, $G_{i+1}\leftarrow G_{\bb{X}_{i+1}}$, and $i\leftarrow i+1$, and return to Step 3. 
            \item[(b)] Otherwise, terminate the procedure and return $W = W_{i-1}$.
        \end{itemize}
    \end{enumerate}
\end{algorithm}

The full procedure for constructing $W$ is described in Procedure \ref{alg:fullalg}, which depends on a generic constant $\xi > 0$; although its use in step 4(a) is not directly related to its used in Lemma \ref{fewbigitems}, we find it sufficient to use the same constant in both cases.  
To justify step 1, we momentarily imagine that there exists a ``genie'' that tells the decoder the identity of the very-present items. Let the test results for $G_0$ and $G_1$ be $\bb{y}_0$ and $\bb{y}_1$ respectively; then, knowing $\bb{X}$, we see that $\bb{y}_0$ can be derived from $\bb{y}_1$ and the genie information. If we can prove that the error probability tends to one even with the help of the genie (and knowing $\bb{y}_1$), then it certainly tends to one without it, so step 1 is justified. After step 1, each item is contained in at most $n^{\xi}$ tests.

Let $w_i$ denote the $i$-th item placed in $W$. Let $D_{w_i}$ be the event that $w_i$ is totally disguised with respect to $\bb{X}_1$, and let $\widetilde{D}_{w_i}$ be the event that $w_i$ is totally disguised with respect to $\bb{X}_{{\rm tmp},i}$ (see Procedure \ref{alg:fullalg} for the definitions of  $\bb{X}_1$ and $\bb{X}_{{\rm tmp},i}$).  
Since the totally disguised event $D_{w_i}$ only depends on the 2-neighborhood of $w_i$ in $G_1$, and the 2-neighborhoods of items in $W$ are pairwise disjoint by construction (due to the $\texttt{Extract}$ subroutine), the events $\sett{D_{w}:w\in W}$ are independent (this independence property for nodes having distance greater than $4$ was also used in \cite{Coj19a}). 

Next, we state the following simple lemma relating the events $D_{w_i}$ and $\widetilde{D}_{w_i}$, both of which represent events of being totally disguised, but with  respect to different test matrices.

\begin{lem} \label{lem:EtoD}
    Under the preceding setup, we have $\pr[D_{w_i}] \geq \pr[\widetilde{D}_{w_i}]$. 
\end{lem}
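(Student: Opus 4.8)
The plan is to prove the stronger statement that the \emph{event} $\widetilde{D}_{w_i}$ is contained in the event $D_{w_i}$; since both probabilities are taken over the same i.i.d.\ defective set $S \subseteq [n]$, the inequality $\pr[D_{w_i}] \ge \pr[\widetilde{D}_{w_i}]$ follows at once. The starting point is the observation that $\bb{X}_{\mathrm{tmp},i}$ is obtained from $\bb{X}_1$ by alternately applying $\texttt{Clean}$ and $\texttt{Extract}$ (namely $\bb{X}_1 \to \texttt{Clean}(\bb{X}_1) = \bb{X}_{\mathrm{tmp},1} \to \bb{X}_2 \to \texttt{Clean}(\bb{X}_2) = \bb{X}_{\mathrm{tmp},2} \to \cdots \to \bb{X}_{\mathrm{tmp},i}$), and each of these subroutines only \emph{deletes} rows (tests) and columns (items). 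Hence $\bb{X}_{\mathrm{tmp},i}$ is a submatrix of $\bb{X}_1$, and in particular $w_i$ is an item of both, since it survives all the intermediate deletions. Because whether $w_i$ is totally disguised depends only on which tests contain $w_i$ and, for each such test, which other items lie in it, it suffices to compare these two pieces of data across $\bb{X}_1$ and $\bb{X}_{\mathrm{tmp},i}$.

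The crux is the claim that \emph{the set of tests containing $w_i$ is identical in $\bb{X}_1$ and $\bb{X}_{\mathrm{tmp},i}$}. One inclusion is immediate from the submatrix structure. For the reverse, I would argue that a test $t$ through $w_i$ cannot be deleted on the way to $\bb{X}_{\mathrm{tmp},i}$. If $t$ were removed by a $\texttt{Clean}$ call, then at that stage $t$ would contain at most one item; since $w_i \in t$ throughout (columns are deleted in their entirety and $w_i$ is not deleted before iteration $i$), that item must be $w_i$ itself, forcing $\texttt{Clean}$ to also delete the column $w_i$ — contradicting the survival of $w_i$. If instead $t$ were removed by an $\texttt{Extract}$ call at some iteration $j < i$, then $t$ lies within distance $4$ of $w_j$ in the current bipartite graph; since test--item distances are odd this forces $d(t,w_j) \le 3$, and since $w_i \in t$ gives $d(w_i,t)=1$, we get $d(w_i,w_j) \le 4$, so $w_i$ would itself be removed by that same $\texttt{Extract}$ call — again contradicting the survival of $w_i$. (This is precisely where the deletion radius $4$ in $\texttt{Extract}$ is needed rather than a smaller one.) I expect this parity/neighborhood bookkeeping to be the main obstacle, since it is the only place the detailed structure of the subroutines enters.

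Granting the claim, let $t$ range over this common collection of tests containing $w_i$. Since $\bb{X}_{\mathrm{tmp},i}$ is a submatrix of $\bb{X}_1$, the set of items of $t$ present in $\bb{X}_{\mathrm{tmp},i}$ is a subset of the set of items of $t$ present in $\bb{X}_1$. Consequently, if for every such $t$ some item other than $w_i$ that is present in $\bb{X}_{\mathrm{tmp},i}$ is defective (i.e.\ $\widetilde{D}_{w_i}$ holds), then \emph{a fortiori} for every such $t$ some item other than $w_i$ present in $\bb{X}_1$ is defective (i.e.\ $D_{w_i}$ holds). This gives $\widetilde{D}_{w_i} \subseteq D_{w_i}$, and therefore $\pr[D_{w_i}] \ge \pr[\widetilde{D}_{w_i}]$, completing the proof.
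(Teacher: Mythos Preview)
Your proof is correct and follows the same approach as the paper: show $\widetilde{D}_{w_i}\subseteq D_{w_i}$ by arguing that $w_i$ lies in exactly the same tests in $\bb{X}_1$ and $\bb{X}_{\mathrm{tmp},i}$, with each such test only having lost items. The paper compresses your two case analyses into the single observation that in both \texttt{Clean} and \texttt{Extract}, whenever a test is removed all of its items are removed (so any test containing the surviving item $w_i$ must itself survive); your parity/triangle-inequality argument for \texttt{Extract} is just the explicit verification of this fact. One minor aside: your parenthetical that ``this is precisely where the deletion radius $4$ is needed'' is not quite right---radius $4$ is chosen to make the $2$-neighborhoods of the $w_i$ disjoint (for independence of the $D_{w_i}$), whereas the present lemma only needs that the item-removal radius be at least one more than the effective test-removal radius, which would hold for smaller radii as well.
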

\begin{proof}
    In each \texttt{Clean}/\texttt{Extract} step, whenever we remove a test, we remove all of its items. It follows that $w_i$ is contained in $\textit{the same tests}$ in $\bb{X}_1$ and $\bb{X}_{\mathrm{tmp},i}$, except that each such test in $\bb{X}_{\mathrm{tmp},i}$ has \textit{fewer items}.  Since a disguised item always remains disguised when further items are added to its tests, it follows that $\widetilde{D}_{w_i}$ implies $D_{w_i}$.  
\end{proof}

In addition, we have the following lower bound on $|W|$, the total number of extracted items.  Here and subsequently, we recall that to prove Theorem \ref{thm:main1}, it suffices to consider the regime $p = n^{-o(1)}$, since for any smaller $p$ (i.e. $p = n^{-\Omega(1)}$), Theorem \ref{thm:main1} was already established in \cite{Coj19a}.  



\begin{lem} \label{lem:num_extracted}
    Under the preceding setup, if $p = n^{-o(1)}$ and $T \le (1-\epsilon)n$, then the size of the set $W$ returned by Procedure \ref{alg:fullalg} satisfies the following:
    \begin{equation}
        |W|\, \ge n^{1-3\xi}. \label{eq:W_asymp}
    \end{equation}
\end{lem}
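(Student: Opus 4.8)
The plan is to track how the quantities $(n_i, T_i)$ evolve across the iterations of Procedure~\ref{alg:fullalg}, and argue that the procedure cannot terminate via Step~4(b) until a large number of items have been extracted. The key accounting is as follows. Each call to \texttt{Extract} removes the test/item neighborhood $T_{\mathrm{close}} \cup I_{\mathrm{close}}$ within distance $4$ of the chosen item $i_0$ in $G_{\bb{X}_{\mathrm{tmp},i}}$. Since (after Step~1) every item lies in at most $n^\xi$ tests, and (by our standing assumption following Lemma~\ref{nobigtests}) every test contains at most $z\ln n$ items, the size of this distance-$4$ neighborhood is bounded by a polynomial in $n$ of the form $n^{2\xi}\,\mathrm{poly}(\ln n)$ — roughly $(n^\xi \cdot z\ln n)^2$ items and a comparable number of tests. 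Each \texttt{Clean} call only removes tests with $0$ or $1$ items (and the incident items), so it removes at most as many items as tests. Thus each full iteration of Steps~3--4(a) decreases $n$ by at most $n^{2\xi}\,\mathrm{poly}(\ln n)$, and the procedure can run for at least $\Omega\!\big( n^{1-2\xi}/\mathrm{poly}(\ln n) \big)$ iterations before $n_i$ hits $0$ — which already gives $|W| \ge n^{1-3\xi}$ for $n$ large, as long as the ratio test in Step~4(a) keeps passing.

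The second ingredient is therefore to show that the ratio condition $\frac{T_i}{n_i} \le (1+\xi)\frac{T}{n}$ in Step~4(a) never fails before we have extracted $n^{1-3\xi}$ items. This is where the stopping condition of Step~4(a) (the refinement mentioned in the note on concurrent work) does its job. Starting from $T \le (1-\epsilon)n$, so $\frac{T}{n} \le 1 - \epsilon$, I would argue that removing the distance-$4$ neighborhood in \texttt{Extract} cannot blow up the test-to-item ratio too much: writing $(n',T')$ for the values after one iteration, we have $n' = n_i - a$ and $T' \le T_i - b$ where $a$ is the number of removed items and $b$ the number of removed tests. The worry is that $a$ could be large relative to $b$ (many items in few tests), pushing the ratio up. But here the \texttt{Clean} step is what saves us: before each \texttt{Extract} we have purged all tests of size $\le 1$, so in $\bb{X}_{\mathrm{tmp},i}$ every surviving test has $\ge 2$ items, and a double-counting of incident (item, test) pairs in the removed neighborhood shows the number of removed items is at most comparable to (a constant times) the number of removed tests. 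Combined with $T/n \le 1-\epsilon$ being bounded away from $1$, this gives enough slack that the ratio stays below $(1+\xi)T/n$ for all of the first $\sim n^{1-2\xi}/\mathrm{poly}(\ln n)$ iterations; a clean way to phrase this is an induction showing $T_i/n_i \le (1+\xi)T/n$ is maintained as an invariant as long as $n_i \ge n^{1-3\xi}$ (say), using that the per-step perturbations to both numerator and denominator are $o(n^{1-3\xi})$.

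Putting these together: the procedure does not terminate via Step~4(b) while $n_i \ge n^{1-3\xi}$, and each iteration extracts exactly one item while decreasing $n_i$ by at most $n^{2\xi}\,\mathrm{poly}(\ln n) = o(n^{1-3\xi})$ (for $\xi$ small, but $\xi$ is an arbitrary fixed constant here — one needs $\xi$ small enough that $1-3\xi > 2\xi$, e.g. $\xi < 1/5$, which we may assume since $\xi$ can be taken as small as we like and the final bound $|W| \ge n^{1-3\xi}$ only weakens). Hence the number of iterations, and thus $|W|$, is at least $n^{1-3\xi}$ once $n$ is large enough. The main obstacle I anticipate is getting the ratio-preservation argument airtight: one must be careful that \texttt{Clean}, which can in principle remove many items at once (all items that ever shared a test with a singleton test), does not by itself wreck the invariant — this is why the procedure interleaves \texttt{Clean} inside the loop and re-derives $(n_i,T_i)$ each time, and the bound needs the post-\texttt{Clean} guarantee that every test has $\ge 2$ items together with $T/n$ bounded away from $1$. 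A secondary technical point is handling the genie reduction and the "no big tests" reduction consistently so that all bounds ($\le n^\xi$ tests per item, $\le z\ln n$ items per test, $T \le n$) are simultaneously in force throughout the procedure; these have already been set up in the preceding lemmas and the Step~1 justification, so this should be routine bookkeeping rather than a genuine difficulty.
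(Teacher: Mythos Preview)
Your proposal has the right overall architecture but contains a genuine gap in the ratio-preservation step. The double-counting claim --- that in the \texttt{Extract} neighborhood the number of removed items is ``at most comparable to (a constant times) the number of removed tests'' --- is false. Items at distance $4$ need only be adjacent to a single distance-$3$ test, while each such test may contain up to $z\ln n$ items, so the item-to-test ratio in the removed neighborhood can be as large as $z\ln n$, not $O(1)$. (Note also that $z = \Theta(1/p) = n^{o(1)}$ here, not $\mathrm{poly}(\ln n)$; this does not break the final $n^{1-3\xi}$ count, but you should track it correctly.) Your fallback ``small per-step perturbation'' idea also does not work as stated, because a single \texttt{Clean} call can remove $\Theta(n)$ items and tests at once.

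The paper's argument sidesteps this by \emph{not} trying to balance items against tests within each \texttt{Extract}. Instead it argues by contradiction and separates the removals by type: (i) each \texttt{Clean} removes exactly as many items as tests (every singleton test contributes one item), and since $T_i/n_i < 1$ throughout, subtracting the same quantity from numerator and denominator can only \emph{decrease} the ratio; (ii) the \texttt{Extract} calls and the very-present removal together delete at most $n^{1-3\xi}\cdot z^2 n^{2\xi}\ln^2 n + z n^{1-\xi}\ln n = o(n)$ items in total (under the hypothesis $|W| < n^{1-3\xi}$), and only ever reduce the test count. Writing $c$ for the total \texttt{Clean} removals and $c' = o(n)$ for the rest, this gives $\frac{T_{i^*}}{n_{i^*}} \le \frac{T-c}{n-c-c'} \le \frac{T}{n}(1+o(1))$, contradicting the stopping condition $\frac{T_{i^*}}{n_{i^*}} > (1+\xi)\frac{T}{n}$. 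The same accounting also gives $n_{i^*} \ge \epsilon n - o(n) > 0$, ruling out the $n_i = 0$ exit. So the missing idea is precisely that \texttt{Clean} \emph{helps} the ratio (because the ratio is below $1$), while everything else is globally $o(n)$; you were circling this at the end but never stated it, and your double-counting substitute does not hold.
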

\begin{proof}
    We first count the number of removed items as follows:
    \begin{itemize}
        \item No more than $T$ items alone in some test are removed by \texttt{Clean}.
        \item Lemma \ref{fewbigitems} implies that we removed at most $zn^{1-\xi}\ln n$ very-present items, and this scales as $o(n)$ due to the fact that $z = \frac{2}{\ln\frac{1}{1-p}} = \Theta\big(\frac{1}{p}\big) = n^{o(1)}$ (by the assumption $p = n^{-o(1)}$).
        \item By the assumption stated following Lemma \ref{nobigtests} and the removal of very-present items, each call to \texttt{Extract} removes at most $z^2n^{2\xi}\ln^2 n$ items.
    \end{itemize}
    We now argue by contradiction that \eqref{eq:W_asymp} must hold.  Suppose to the contrary that Procedure \ref{alg:fullalg} terminates at some iteration $i^* \le n^{1-3\xi}$.  Then, the above calculations imply that
    \begin{align}
        n_{i^*} &\ge n - T - zn^{1-\xi}\ln n- n^{1-3\xi} \cdot  z^2n^{2\xi}\ln^2n \label{eq:n_i_0} \\
        &\ge \epsilon n - o(n), \label{eq:n_i}
    \end{align}
    where we used the fact that $T \le (1-\epsilon)n$ and $z = n^{o(1)}$.  This means that the stopping condition met in step 4(a) cannot have been $n_i$ reaching zero, so it must have been $\frac{T_{i^*}}{n_{i^*}}$ exceeding $(1+\xi) \frac{T}{n}$.
    
    While \eqref{eq:n_i} indicates that the majority of items could eventually be removed in principle, this is only due to the subtraction of $T$ in \eqref{eq:n_i_0}; the other two terms behave as $o(n)$, and we conclude that the removal of very-present items and the calls to \texttt{Extract} collectively only remove $o(n)$ items.  Any further removal of items can only be due to the tests containing one item in \texttt{Clean}; removing these causes $T_i$ and $n_i$ to be reduced by the {\em same amount}.  However, as long as $\frac{T_i}{n_i} < 1$ (which holds by assumption for $i=0$, and subsequently for all $i \le i^*$ due to the stopping condition), reducing $T_i$ and $n_i$ by the same amount can {\em only make the ratio smaller} (i.e., $\frac{T_i - c}{n_i - c} \le \frac{T_i}{n_i}$ for any $c \in [0,T_i]$).
    
    More formally, suppose that up to index $i^*$, a total of $c$ items and tests are removed due to tests containing a single item, and a total of $c'$ items are removed for the other reasons mentioned above.  Then, we have
    \begin{equation}
        \frac{T_{i^*}}{n_{i^*}} \le \frac{ T - c }{ n - c - c' }. \label{eq:ratio}
    \end{equation}
    As established above, we have $0 \le c \le T \le n(1-\epsilon)$ and $c' = o(n)$.  However, since $\frac{T}{n} < 1$ by assumption, substituting these findings into \eqref{eq:ratio} reveals that $\frac{T_{i^*}}{n_{i^*}} \le \frac{T}{n}(1+o(1))$, which gives the desired contradiction to the stopping condition $\frac{T_{i^*}}{n_{i^*}} > (1+\xi) \frac{T}{n}$.

\end{proof}

Recall that all of the extracted items have independent totally disguised events, each with probability lower bounded according to Lemma \ref{lem:Aldridge}.  
We need to consider applying this lemma with possibly smaller choices of $T$ and $n$ than the original values (namely, $T_i$ and $n_i$), but the stopping condition in step 4(a) of Procedure \ref{alg:fullalg} ensures that $\frac{T_i}{n_i} \le (1+\xi) \frac{T}{n}$.  
As a result, Lemmas \ref{lem:Aldridge} and \ref{lem:EtoD} guarantee for any extracted item $i$ that 
\begin{align}
    \pr[D_i]\geq\exp\bigg( \frac{T(1+\xi)}{n}\cdot \mathcal{L}(p) \bigg), \label{eq:Di_bound1}
\end{align}
where we recall that $\mathcal{L}(p)=\min_{x=2,3,\ldots,n}x\ln(1-q^{x-1})$ with $q = 1-p$.  Note that $x\ln(1-q^{x-1}) < 0$, so this minimum is to be interpreted as ``most negative''. This minimum is characterized in the following lemma, which is similar to \cite[Claim 3.12]{Coj19a}.

\begin{lem} \label{lem:x_opt}
    For any $p \le \frac{1}{2}$ satisfying $p = n^{-o(1)}$, we have the following: (i) If $p = o(1)$, then $-\mathcal{L}(p) = \frac{(\ln 2)^2}{p} (1+o(1))$; (ii) If $p = \Theta(1)$, then  $-\mathcal{L}(p) = \Theta(1)$.
\end{lem}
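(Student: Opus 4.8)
The plan is to treat the integer index $x$ as a real variable, identify the nearly optimal choice $x^\star \approx \frac{\ln 2}{p}$, and prove a matching converse: no admissible $x$ makes $x\ln(1-q^{x-1})$ much more negative than $x^\star$ does. Writing $f(x) = x\ln(1-q^{x-1})$, the goal is to show that $-\mathcal{L}(p) = \max_{x\in\{2,\ldots,n\}}\big(-f(x)\big)$ equals $\frac{(\ln 2)^2}{p}(1+o(1))$ when $p=o(1)$, and is $\Theta(1)$ when $p=\Theta(1)$. This is the analogue of \cite[Claim 3.12]{Coj19a}, so the overall strategy mirrors theirs, with the hypothesis $p = n^{-o(1)}$ used precisely to guarantee that $x^\star$ lies in the admissible range $\{2,\ldots,n\}$.

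For the converse I would bound $-f(x)$ uniformly. Using $q^{x-1}=(1-p)^{x-1}\le e^{-p(x-1)}$ and the fact that $-\ln(1-\cdot)$ is increasing, we get $-f(x)\le x\big(-\ln(1-e^{-p(x-1)})\big)$. Substituting $s:=p(x-1)$ (so $s\ge p$ since $x\ge 2$, and $x=1+\frac{s}{p}$),
\[
-f(x)\ \le\ \Big(1+\tfrac{s}{p}\Big)\big(-\ln(1-e^{-s})\big)\ =\ -\ln(1-e^{-s})\ +\ \tfrac{1}{p}\,h(s), \qquad h(s):=-s\ln(1-e^{-s}).
\]
A routine calculus computation shows that $h(s)>0$ on $(0,\infty)$, $h(s)\to 0$ as $s\to 0^+$ and as $s\to\infty$, and $h$ has a unique interior stationary point at $s=\ln 2$, where $h(\ln 2)=(\ln 2)^2$; hence $h(s)\le(\ln 2)^2$ for all $s>0$. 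The leftover term $-\ln(1-e^{-s})$ is decreasing in $s$, so on $s\ge p$ it is at most $-\ln(1-e^{-p})\le\ln\frac{2}{p}$, using $1-e^{-p}\ge \frac{p}{2}$. Thus $-f(x)\le\frac{(\ln 2)^2}{p}+\ln\frac{2}{p}$ uniformly over admissible $x$. When $p=o(1)$ we have $\ln\frac{2}{p}=o(1/p)$, so $-\mathcal{L}(p)\le\frac{(\ln 2)^2}{p}(1+o(1))$; when $p=\Theta(1)$ the bound is $O(1)$.

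For achievability I would take $x_0=\big\lceil\frac{\ln 2}{p}\big\rceil$. Since $p\le\frac12$ we have $x_0\ge 2$, and since $p=n^{-o(1)}$ we have $x_0=n^{o(1)}\le n$ for large $n$, so $x_0$ is admissible. When $p=o(1)$, expanding $\ln(1-p)=-p(1+O(p))$ gives $(x_0-1)\ln(1-p)=-\ln 2\,(1+o(1))$, hence $q^{x_0-1}=\tfrac12(1+o(1))$, $\ln(1-q^{x_0-1})=-\ln 2\,(1+o(1))$, and $f(x_0)=x_0\ln(1-q^{x_0-1})=-\frac{(\ln 2)^2}{p}(1+o(1))$. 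This yields $-\mathcal{L}(p)\ge\frac{(\ln 2)^2}{p}(1+o(1))$, which matches the converse and proves part (i). For part (ii), combine the $O(1)$ upper bound from the previous paragraph with the trivial lower bound $-\mathcal{L}(p)\ge -f(2)=-2\ln(1-q)=-2\ln p$, which is a positive constant since $p$ is bounded away from both $0$ and $1$; hence $-\mathcal{L}(p)=\Theta(1)$.

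The main obstacle is essentially bookkeeping: one must get a bound on $-f(x)$ that is uniform over \emph{all} admissible integers $x$, including the small ones where the correction term $-\ln(1-e^{-p(x-1)})$ is largest, and confirm that this correction is negligible next to $\frac{1}{p}$ — which holds because $p\ln\frac{1}{p}\to 0$. The only genuinely delicate step is the elementary extremal fact $\max_{s>0}\big(-s\ln(1-e^{-s})\big)=(\ln 2)^2$, and this is done much as in \cite{Coj19a}.
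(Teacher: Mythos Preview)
Your proof is correct and follows essentially the same strategy as the paper: both reduce to the extremal fact $\max_{s>0}\big(-s\ln(1-e^{-s})\big)=(\ln 2)^2$ at $s=\ln 2$, and both handle achievability by choosing $x\approx \ln 2/p$. The one tactical difference is in the converse: the paper argues by a case split on the scaling of $x$ relative to $1/p$ (showing $x=o(1/p)$ and $x=\omega(1/p)$ each yield $o(1/p)$, then Taylor-expanding on the remaining $x=\Theta(1/p)$ region), whereas your bound $-f(x)\le -\ln(1-e^{-s})+\tfrac{1}{p}h(s)$ via $q^{x-1}\le e^{-p(x-1)}$ gives a uniform estimate over all admissible $x$ in one stroke, with the residual $-\ln(1-e^{-s})\le\ln\tfrac{2}{p}=o(1/p)$ absorbing the boundary behavior. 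This is a slightly cleaner packaging of the same argument.
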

\begin{proof}
    We provide a simple generalization of the argument from \cite[Claim 3.12]{Coj19a}, which focuses on the regime $k \le n^{1-\Omega(1)}$.
    We first write
    \begin{equation}
        -\mathcal{L}(p)=\max_{x=2,3,\ldots,n}x\ln\frac{1}{1-q^{x-1}}.
    \end{equation}
    This quantity is lower bounded by the argument corresponding to $x = \lceil \frac{1}{p}\rceil + 1$, which readily yields $\ln\frac{1}{1-q^{x-1}} = \Theta(1)$ and hence an $\Omega\big(\frac{1}{p}\big)$ lower bound on $-\mathcal{L}(p)$.
    
    For the upper bound, we note that for $x = o\big( \frac{1}{p} \big)$ we have $q^{x-1} = (1-p)^{x-1} = 1-\Theta(px)$, so the objective function behaves as $O(x \ln \frac{1}{px})$, which is  $o\big(\frac{1}{p}\big)$ (since $px \ln \frac{1}{px} \to 0$ as $px \to 0$).  On the other hand, if $x = \omega\big(\frac{1}{p}\big)$, then $q^{x-1} = (1-p)^{x-1} \to 0$, so the objective behaves as $O(x (1-p)^{x-1}) = O(x e^{-px})$, which is $o\big(\frac{1}{p}\big)$  (since $px e^{-\Theta(px)} \to 0$ as $px \to \infty$).  Hence, the optimal choice of $x$ must scale as $\Theta\big(\frac{1}{p}\big)$, and in this case, we have $\ln\frac{1}{1-q^{x-1}} = \Theta(1)$, yielding an $O\big(\frac{1}{p}\big)$ upper bound on $-\mathcal{L}(p)$.
    
    The second part of the lemma follows immediately, whereas for the first part, a refined analysis is needed.  For $p = o(1)$ and $x = \Theta\big( \frac{1}{p} \big)$, we have
    \begin{align}
        x\ln\frac{1}{1-(1-p)^{x-1}}
        &= x\ln\frac{1}{1-e^{-px}(1+O(p))} \\
        &= -x\ln(1-e^{-px}) + O(px)
    \end{align}
    by standard Taylor expansions.  Since $O(px) = O(1)$ is asymptotically negligible compared to the $\Theta\big( \frac{1}{p} \big)$ scaling derived above, it suffices to consider maximizing the first term.  As noted in \cite{Coj19a}, we can define $d = px$ and write this term as $\frac{1}{p}\big( - d \ln(1-e^{-d}) \big)$, and it is a simple differentiation exercise to verify that $-d \ln(1-e^{-d})$ is maximized at $d = \ln 2$, with maximum value $(\ln 2)^2$.  While the corresponding choice $x = \frac{d}{p}$ may not be integer-valued, the effect of rounding is asymptotically negligible for $p = o(1)$ by the continuity of the function $-d \ln(1-e^{-d})$.
\end{proof}

Combining Lemma \ref{lem:x_opt} with \eqref{eq:Di_bound1}, we obtain for some $c_p = \Theta(1)$ that
\begin{align}
    \pr[D_i]\geq\exp\bigg( - \frac{(1+\xi) c_p T}{n p} \bigg), \label{eq:Di_bound2}
\end{align}
and moreover, when $p = o(1)$ we specifically have $c_p = (\ln 2)^2 (1+o(1))$.

Since the events $\{D_i\}_{i \in W}$ are mutually independent by construction, and $|W| \ge n^{1-3\xi}$ according to Lemma \ref{lem:num_extracted}, we deduce that the number of totally disguised items is stochastically dominated by ${\rm Binomial}( n^{1-3\xi}, e^{-\frac{(1+\xi) c_p T}{ np }} )$.  In particular, the average number of totally disguised items is
\begin{equation}
    n^{1-3\xi} e^{-\frac{(1+\xi) c_p T}{ np }},
\end{equation}
and by simple re-arrangements, this is lower bounded by $n^{\xi}$ whenever
\begin{equation}
    T \le \frac{np(1-4\xi)}{(1+\xi) c_p } \ln n. \label{eq:T_bound2}
\end{equation}
Thus, by the multiplicative form of the Chernoff bound, the actual number is at least $N_{\min} := \frac{1}{2} n^{\xi}$ with probability approaching one when \eqref{eq:T_bound2} holds.

By Lemma \ref{lem:cond_disguised} and the assumption $p \le \frac{1}{2}$, for any item that is disguised, the optimal algorithm can do no better than declare it to be non-defective, and the resulting probability of being correct is at most
\begin{equation}
    (1-p)^{N_{\min}} \le e^{-p N_{\min}} = e^{-\frac{p}{2} n^{\xi} } = o(1),
\end{equation}
where the last step follows from the assumption $p =  n^{-o(1)}$.  Thus, we have proved that $\PP[\hat{S} = S] = o(1)$ whenever $T \le (1-\epsilon) n$ and \eqref{eq:T_bound2} holds.

When $p = \Theta(1)$ (or more generally $p = \omega\big( \frac{1}{\log n}\big)$), the stricter of these two conditions is $T \le (1-\epsilon) n$, and the constant $c_p$ in \eqref{eq:T_bound2} is inconsequential.  On the other hand, when $p = o(1)$, we have established that $c_p = (\ln 2)^2 (1+o(1))$.  Since $\xi$ can be arbitrarily small, it follows that \eqref{eq:T_bound2} reduces to $T \le (1-\epsilon') \frac{np}{ (\ln 2)^2 } \ln n$ for arbitrarily small $\epsilon' > 0$.  Finally, since $\frac{\ln n}{\ln 2} = \log_2 n$, and the assumption $p = n^{-o(1)}$ implies that $\log_2 n = (\log_2 k)(1+o(1))$, we obtain the desired threshold corresponding to the second term of $T^*$ in \eqref{eq:T*}.

\subsection{Proof of the Lower Bound for Corollary \ref{cor:main1}} \label{sec:pf_cor}

We utilize an approach from \cite[Lemma 3.6]{Coj19a} for transferring the key auxiliary results on the number of disguised items from the i.i.d.~prior to the combinatorial prior.  Despite the high level of similarity, we provide the main details for completeness.

The idea is to show that with too few tests, the number of totally disguised defectives and totally disguised non-defectives both grow unbounded with high probability.  When this occurs, interchanging the statuses among these items would not impact the test results, and hence, there exist an unbounded number of candidate defective sets of cardinality $k$ consistent with the test outcomes.  The decoder cannot do any better than guess one of these at random, failing with high probability. This intuition is easily made precise \cite{Coj19a}, giving the following.

\begin{lem} \label{lem:cond_error}
    {\em \cite[Facts 3.1 and 3.3]{Coj19a}}
    Under the combinatorial prior, the conditional error probability of any group testing strategy given that there are $\tilde{n}_0$ totally disguised non-defectives and $\tilde{n}_1$ totally disguised defectives is at least $1 - \frac{1}{\tilde{n}_0 \tilde{n}_1}$.   In particular, if $\tilde{n}_0 = \omega(1)$ and $\tilde{n}_1 = \omega(1)$, then the conditional error probability is $1-o(1)$.
\end{lem}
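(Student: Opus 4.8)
The plan is to make precise the intuition stated just above: under the combinatorial prior, a scarcity of tests forces many size-$k$ defective sets to be indistinguishable given the observed outcomes, and the uniform prior then pins down the failure probability. Fix the test matrix $\bb{X}$, and recall from the remark following Definition~\ref{def:disguised} that flipping the status of a totally disguised item leaves every test outcome unchanged. The key combinatorial step I would establish is the slightly stronger statement that, if $i$ is a totally disguised non-defective and $j$ a totally disguised defective with respect to $S$, then the swapped set $S'=(S\setminus\{j\})\cup\{i\}$ has cardinality $k$ and satisfies $\bb{y}(S')=\bb{y}(S)$. This is a short case analysis over whether a given test contains neither, only $i$, only $j$, or both of $i,j$; in each case one uses that $i$ (resp.\ $j$) is disguised in every test it lies in, and the only point needing a moment's thought is that a test containing both $i$ and $j$ loses the defective $j$ but gains the defective $i$, so it stays positive.

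Next I would count the resulting alternatives. As $(i,j)$ ranges over the $\tilde{n}_0\tilde{n}_1$ pairs formed by a totally disguised non-defective and a totally disguised defective, the sets $S'=(S\setminus\{j\})\cup\{i\}$ are pairwise distinct and distinct from $S$ --- indeed, $S'$ determines $i$ as its unique element outside $S$ and $j$ as the unique element of $S$ missing from $S'$. Hence at least $\tilde{n}_0\tilde{n}_1+1$ size-$k$ defective sets (including $S$) produce the observed outcomes. Under the combinatorial prior $S$ is uniform over size-$k$ subsets, so conditioned on the outcomes the posterior is uniform over the consistent size-$k$ sets; since there are at least $\tilde{n}_0\tilde{n}_1+1$ of them, no decoder can be correct with conditional probability exceeding $\tfrac{1}{\tilde{n}_0\tilde{n}_1}$, giving conditional error at least $1-\tfrac{1}{\tilde{n}_0\tilde{n}_1}$. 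The ``in particular'' claim then follows by letting $\tilde{n}_0,\tilde{n}_1\to\infty$.

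The step I expect to be most delicate is making the conditioning rigorous: the statement conditions on there being $\tilde{n}_0$ totally disguised non-defectives and $\tilde{n}_1$ totally disguised defectives, so one must verify that the $\tilde{n}_0\tilde{n}_1$ swapped sets remain inside this conditioning event --- i.e.\ that swapping a totally disguised defective for a totally disguised non-defective does not change the relevant counts in a way that escapes the event --- or, alternatively, recast the conditional success probability as a double-counting argument over outcome vectors: for each $\bb{y}$ there is at most one consistent size-$k$ set on which the decoder is correct, to be weighed against the $\ge\tilde{n}_0\tilde{n}_1$ consistent sets compatible with $\bb{y}$. Both routes are precisely what is carried out in \cite[Facts~3.1 and 3.3]{Coj19a}, which I would invoke; everything else reduces to the elementary observations above.
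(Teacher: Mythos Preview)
Your approach matches the paper's treatment: the lemma is not proved in the paper but cited from \cite{Coj19a}, with only the informal swapping intuition given as justification. Your sketch of that intuition---the swap argument showing $\bb{y}(S')=\bb{y}(S)$ and the count of $\tilde{n}_0\tilde{n}_1+1$ distinct consistent sets---is correct and more detailed than what the paper provides.

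One remark on the point you flag as delicate. Your concern is warranted: the swapped sets $S'$ need \emph{not} lie in the conditioning event $\{\tilde N_0=\tilde n_0,\ \tilde N_1=\tilde n_1\}$. Observe that $\tilde n_0$ is in fact determined by the outcome vector alone (it equals $|P|-k$, where $P$ is the set of items appearing only in positive tests, and $S\subseteq P$), so $\tilde n_0$ is preserved under any swap. But $\tilde n_1$ can change: if a test $t$ satisfies $t\cap S=\{\ell,j\}$ with $i\notin t$, then after swapping, $\ell$ is no longer disguised in $t$. So your route (a) does not go through directly, and your second paragraph's passage from ``posterior uniform over $\ge\tilde n_0\tilde n_1+1$ sets'' to the stated conditional bound is indeed conditioning on $\bb{y}$ rather than on $(\tilde n_0,\tilde n_1)$. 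You correctly isolate this and defer to \cite{Coj19a}; since that is exactly what the paper does, your proposal is fine as written.
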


Consider the combinatorial prior with $n^{1-o(1)} \le k \le \frac{n}{2}$, where the condition $k \ge n^{1-o(1)}$ is safe to assume since Corollary \ref{cor:main1} is already well-known when $k \le n^{1-\Omega(1)}$ (see Section \ref{sec:related}).  We consider generating $S$ according to the following procedure:
\begin{enumerate}
    \item Let $S_{0} \subseteq [n]$ include each item independently with probability $p_0 = \frac{k - \sqrt{k} \ln n}{n}$.  That is, $S_0$ follows the i.i.d.~prior with parameter $p_0$.
    \item Form $S$ by adding $\max\{k - |S_0|, 0\}$ elements of $[n] \setminus S_0$ to $S_0$, chosen uniformly at random.
\end{enumerate}
By the symmetry of this construction, conditioned on the event $|S_0| \le k$, the resulting set $S$ is indeed distributed according to the combinatorial prior.  While $|S_0| >  k$ has a non-zero probability, for the purposes of proving a converse, we can simply assume that this event always leads to successful recovery.  Since we assume that $k \ge n^{1-o(1)}$, a simple concentration argument (e.g., the Chernoff bound or central limit theorem) gives with probability $1-o(1)$ that
\begin{equation}
    k - 2\sqrt{k} \ln n \le |S_0| \le k, \label{eq:conc}
\end{equation}
so the resulting contribution to the success probability is asymptotically negligible.

We now introduce the terminology that an item $i$ is {\em totally disguised in the first step} if the defectives from $S_0$ alone are enough to disguise $i$ in every test it is included in.  Clearly, being totally disguised in the first step is sufficient for being totally disguised after the second step, since the second step only involves marking more items as defective.

Hence, trivially, the number of totally disguised defective items only increases (or stays the same) after the second step.  The number of totally disguised non-defectives may in principle decrease due to non-defectives being changed to defective, but conditioned on \eqref{eq:conc}, any given non-defective is only changed with probability $O\big( \frac{\sqrt{k} \ln n}{n} \big) = o(1)$.  As a result, if there are $\omega(1)$ totally disguised non-defectives, the same still remains true with probability $1-o(1)$ after the second step.

Hence, in accordance with Lemma \ref{lem:cond_error}, it suffices to show that under the i.i.d.~prior with parameter $p_0 = \frac{k - \sqrt{k} \ln n}{n}$, the number of totally disguised defectives and totally disguised non-defectives both behave as $\omega(1)$ with probability $1-o(1)$.   Note that the assumption $n^{1-o(1)} \le k \le \frac{n}{2}$ ensures that $n^{-o(1)} \le p_0 \le \frac{1}{2}$, as was assumed in the later parts of Section \ref{sec:pf_thm}.

We already argued that when \eqref{eq:T_bound2} holds, the average number of totally disguised items is at least $n^{\xi}$.  Since $n^{-o(1)} \le p_0 \le \frac{1}{2}$, it follows that the average number of totally disguised defectives and totally disguised non-defectives are both at least $n^{\xi-o(1)}$ on average.  Again, the multiplicative form of the Chernoff bound implies the same with high probability, and we have the desired $\omega(1)$ scaling.  This establishes that the condition on $T$ from Theorem \ref{thm:main1} with $p_0$ in place of $p$ is necessary for attaining a success probability bounded away from zero, and since $np_0 = k(1+o(1))$ by definition, Corollary \ref{cor:main1} follows.

\section{Conclusion} \label{sec:conclusion}

We have proved that the optimal number of tests for probabilistic noiseless non-adaptive group testing is $\Theta(\min\{k \ln n,n\})$, as well as establishing the precise underlying constant factors.  This closes gaps exhibited by existing bounds in the in the case that $k$ is ``mildly'' sublinear in $n$, so that the optimal thresholds are now known for arbitrary scaling regimes.  Perhaps the main challenge remaining in this setting is to understand how the number of tests increases when the target error probability decreases to zero at a given rate depending on $n$.


\appendix

\section{Proofs of Algorithmic Upper Bounds} \label{app:dd}

Under the combinatorial prior, the algorithmic upper bound for Theorem \ref{thm:main1} in the regime $k \le n^{1-\Omega(1)}$ is proved in \cite{Joh16} using the following strategy:
\begin{itemize}
    \item {\bf Test design.} Generate the $T \times n$ test matrix $\Xv$ according to the {\em near-constant tests-per item} design: For each item $i = 1,\dotsc,n$, select $L = \lfloor \frac{T \ln 2}{k} \rfloor$ tests uniformly at random {\em with replacement}, and set the corresponding entries in the $i$-th column of $\Xv$ to one.
    \item {\bf Decoding algorithm.} Given the test outcomes, estimate the defective set using the {\em Definite Defectives (DD)} algorithm:
    \begin{itemize}[leftmargin=7ex]
        \item[(i)] Mark all items in negative tests as {\em definitely non-defective}, and all remaining items as {\em possibly defective} (PD);
        \item[(ii)] For any PD item appearing in some (necessarily positive) test without any other PD items, mark it as {\em definitely defective} (DD).
        \item[(iii)] Return the set of DD items as the final estimate.
    \end{itemize}
\end{itemize}
The main result of \cite{Joh16} states that when $k = \Theta(n^{\theta})$ with $\theta \in (0,1)$ and the number of tests satisfies
\begin{equation}
    T \ge \frac{ \max\{\theta,1-\theta\} }{\ln 2} \big(k \log_2 n\big) (1+\epsilon) \label{eq:T_old}
\end{equation}
for some $\epsilon > 0$, the resulting error probability approaches zero as $n \to \infty$.  Our goal is to generalize this result to denser sparsity regimes.

The condition \eqref{eq:T_old} ensures that $L = \lfloor \frac{T \ln 2}{k} \rfloor$ scales as $\omega(1)$.  Hence, the effect of rounding is negligible, in the sense that $L =  \frac{T \ln 2}{k} (1+o(1))$.  As in \cite{Joh16}, we subsequently work with the exact expression $L =  \frac{T \ln 2}{k}$ for notational convenience, since the $o(1)$ term does not affect the final result.

With the regime $k \le n^{1-\Omega(1)}$ having been handled in \cite{Joh16}, it suffices to consider $k = n^{1-o(1)}$.  In this regime, it holds that $T^*(n,k) = \frac{k \log_2 k}{\ln 2}$.  For convenience, we apply the fact that $\log_2 k = (\log_2 n)(1+o(1))$ (whenever $k = n^{1-o(1)}$), meaning that it suffices to show that the success probability approaches one when
\begin{equation}
    T \ge \frac{k \log_2 n}{\ln 2} (1+\epsilon). \label{eq:T_choice}
\end{equation}
Observe that this matches \eqref{eq:T_old}, but with the quantity 
\begin{equation}
    m =\max\{\theta,1-\theta\} \label{eq:m_def}
\end{equation}
replaced by $m=1$.

{\bf Analysis.} We start with the following bound which is central to the analysis of \cite{Joh16}, and is conveniently non-asymptotic so can can also be used here: For any defective $i \in S$, denoting the final estimate by $\Shat$, we have
\begin{align}
    \PP[i \notin \Shat] 
    &\le \underbrace{\sum_{w \in [w_-,w_+]} \PP[W^{(S\setminus i)} = w] \sum_{j=0}^L \PP[M_i = j | W^{(S\setminus i)} = w] \phi_j(1/w_-, g^* L)}_{:=\Psi_1} \nonumber \\
    &\qquad\qquad\qquad + \underbrace{\PP[W^{(S\setminus i)} \notin [w_-,w_+]]}_{:=\Psi_2} + \underbrace{\PP[G > g^* | W^{(S\setminus i)} \notin [w_-,w_+]]}_{:=\Psi_3}, \label{eq:three_terms}
\end{align}
where:
\begin{itemize}
    \item $W^{(S\setminus i)}$ denotes the number of (necessarily positive) tests containing at least one item in $S \setminus \{i\}$, i.e., a defective item differing from $i$;
    \item $M_i$ denotes the number of tests containing $i \in S$ and no other defectives;
    \item $G$ denotes the number of non-defectives that do not appear in any negative tests;
    \item $w_-$ and $w_+$ are arbitrary thresholds, but should be chosen to ensure that $W^{(S\setminus i)} \in [w_-,w_+]$ with high probability;
    \item $g^*$ is an arbitrary threshold, but should be chosen to ensure that $G \le g^*$ with high probability;
    \item $\phi_j(s,V) = \sum_{\ell = 0}^j (-1)^{\ell} {j \choose \ell} (1 - \ell s)^V$ is a quantity arising from applying the inclusion-exclusion principle to a union of events in a coupon collector problem \cite[Appendix B]{Joh16}.
\end{itemize}
We set $w_-$, $w_+$, and $g^*$ in the same way as \cite{Joh16}:
\begin{align}
    w_- &= \frac{T}{2} (1-\delta) \label{eq:w-} \\
    w_+ &= \frac{T}{2} (1+\delta) \label{eq:w+} \\
    g^* &= n\Big(\frac{1}{2} + \delta\Big)^L, \label{eq:g*}
\end{align}
for some $\delta > 0$ to be specified later.  The interaction between $\delta$ and $\epsilon$ (see \eqref{eq:T_choice}) turns out to be slightly delicate, and choosing them appropriately is the main difference here compared to \cite{Joh16}. 

The analysis of \cite{Joh16} focuses on the case that \eqref{eq:T_old} holds with equality.  This is without loss of generality, since additional tests can only ever help the DD algorithm.  We similarly assume that \eqref{eq:T_choice} holds with equality.  
In view of the union bound over the $k$ defectives, the goal is to show that $k \Psi_{\nu} \to 0$ for $\nu \in \{1,2,3\}$ in \eqref{eq:three_terms}.  We proceed as follows:
\begin{enumerate}
    \item For $\Psi_1$, it is shown in \cite[Eq.~(39)]{Joh16} that if $L = m(1+\epsilon) \frac{\ln n}{\ln 2}$ (which holds via $L = \frac{T \ln 2}{k}$ and equality holding in \eqref{eq:T_old}) and $k \le c n^m$ for some constant $c$ (with $m$ given in \eqref{eq:m_def}), then 
    \begin{align}
        k\Psi_1 \le c \exp\Big( \frac{L^2}{4w_-} \Big) \exp\bigg( -\Big( \epsilon - \frac{1+\epsilon}{\ln 2} \Big(\delta + \frac{g^* L}{w_-} (1-\delta)\Big)   \Big) m \ln n \bigg). \label{eq:kPsi1}
    \end{align}
    Recall that we are adopting the choice $m=1$; this means that the condition $k \le c n^m$ is trivially satisfied with $c = 1$.  Hence, if we can further establish that $\frac{L^2}{4w_-} = o(1)$ and $\frac{g^* L}{w_-} = o(1)$, it will follow from \eqref{eq:kPsi1} that
    \begin{align}
        k\Psi_1 \le (1+o(1)) \exp\bigg( -\Big( \epsilon - \frac{1+\epsilon}{\ln 2} (\delta + o(1)) \Big) \ln n \bigg).
    \end{align}
    This approaches zero as $n \to \infty$ when $\delta$ is strictly smaller than $\frac{\epsilon \ln 2}{1+\epsilon}$.  For concreteness, we set $\delta = \frac{2}{3}\epsilon$ (note that $\frac{2}{3} < \ln 2$), so that the preceding requirement holds when $\epsilon$ is sufficiently small.
    
    The above-mentioned requirement $\frac{L^2}{4w_-} = o(1)$ follows immediately from the fact that $L = \Theta(\log n)$ and $w_- = \Theta(T) = \Theta(k \log n)$ (with $k = n^{1-o(1)}$).  As for $\frac{g^* L}{w_-}$, the steps in \cite[Eq.~(31)]{Joh16} turn out to be too loose for our purposes, but are easily modified: Combining $L = \frac{T \ln 2}{k}$ with \eqref{eq:w-} gives $\frac{L}{w_-} = \frac{2 \ln 2}{k(1-\delta)}$, and further combining with \eqref{eq:g*} gives
    \begin{align}
        \ln \frac{g^* L}{w_-} = \ln \frac{n}{k} + L\ln\Big( \frac{1}{2} + \delta\Big) + \ln\frac{2 \ln 2}{1-\delta}. \label{eq:log_ratio}
    \end{align}
    The assumption $k = n^{1-o(1)}$ gives $\ln \frac{n}{k} = o(\log n)$, and combining this with $L = \Theta(\log n)$ and $\ln\big( \frac{1}{2} + \delta\big) < 0$ (for small enough $\delta$), it follows that the right-hand side of \eqref{eq:log_ratio} approaches $-\infty$, and hence $\frac{g^* L}{w_-} = o(1)$ as desired.
    \item For $\Psi_2$, we can directly use the following finding from \cite{Joh16} based on McDiarmid's inequality:
    \begin{equation}
        k \Psi_2 \le k\exp\Big( \frac{\delta^2 T}{4 \ln2} (1+o(1)) \Big).
    \end{equation} 
    This approaches zero as $n \to \infty$, since $T = \Theta(k \log n)$.
    \item For $\Psi_3$, we use the following bound \cite[Eq.~(42)]{Joh16} based on Bernstein's inequality, which holds provided that $L \to \infty$ (which we already established) and $\delta \le \frac{1}{4}$:
    \begin{equation}
        k \Psi_3 \le k \exp\bigg( - n \frac{(1/2+\delta)^L}{2/3 + o(1)} \bigg). \label{eq:Psi3_bound}
    \end{equation}
    Recall that $L = \frac{T \ln 2}{k}$; substituting $T$ equaling the right-hand side of \eqref{eq:T_choice} gives $L = (1+\epsilon) \log_2 n$. We proceed by considering the logarithm (base 2) of $n(1/2+\delta)^L$:
    \begin{align}
        \log_2 \Big( n(1/2+\delta)^L \Big) 
        &= \log_2 n + L \log_2 \Big(\frac{1}{2} + \delta\Big) \\
        &=  \big(\log_2 n\big) \bigg[ 1 + (1+\epsilon)\log_2\Big(\frac{1}{2} + \delta\Big) \bigg]. \label{eq:log_cond}
    \end{align}
    Using the above choice $\delta = \frac{2}{3} \epsilon$, a simple Taylor expansion yields the following as $\epsilon \to 0$:\footnote{In fact, a visual plot reveals that $1+(1+\epsilon)\log_2\big(\frac{1}{2} + \frac{2}{3}\epsilon\big)$ is positive for all $\epsilon > 0$.}
    \begin{equation}
        (1+\epsilon)\log_2\Big(\frac{1}{2} + \frac{2}{3}\epsilon\Big) = -1 + \epsilon\Big( \frac{2}{3} \cdot \frac{2}{\ln 2} - 1 \Big) + o(\epsilon).
    \end{equation}
    Hence, since $\frac{2}{3} \cdot \frac{2}{\ln 2} \approx 1.92 > 1$, we have for sufficiently small $\epsilon$ that \eqref{eq:log_cond} is positive and scales as $\Theta(\log n)$, and substituting into \eqref{eq:Psi3_bound} gives $k \Psi_3 \le k \exp\big( - n^{\Theta(1)} \big)\to 0$, as desired.
\end{enumerate}
Since the above analysis holds for arbitrarily small $\epsilon > 0$ (and hence arbitrarily small $\delta > 0$ via $\delta = \frac{2}{3}\epsilon$) when the number of tests satisfies \eqref{eq:T_choice} with equality, the upper bound in Theorem \ref{thm:main1} follows.



{\bf Handling the i.i.d.~prior.} While \cite{Joh16} only considers the combinatorial prior with a fixed value of $k$, the analogous result follows essentially immediately for the i.i.d.~prior, in which $k$ is a random variable.  This is because by a simple concentration argument (e.g., Hoeffding's inequality), as long as $np \to \infty$, it holds that $k = np(1+o(1))$ with probability approaching one.  We can therefore replace the choice $L =  \frac{T \ln 2}{k} (1+o(1))$ by $L =  \frac{T \ln 2}{np} (1+o(1))$, and under the high-probability event $k = np(1+o(1))$, the two are equivalent up to a change in the $o(1)$ term.  Since conditioning on any particular value of $k$ under the i.i.d.~prior brings us back to the combinatorial prior, the desired result follows.

\section*{Acknowledgment}

We are very grateful to an anonymous reviewer for helpful suggestions and encouraging us to establish the precise constant factors in the analysis, which allowed us to significantly strengthen an earlier version of our main result containing an unspecified constant.

\bibliographystyle{imaiai}
\bibliography{JS_References,HYP-biblio}

\end{document}